\def\Z{\mathbb{Z}}
\lstdefinelanguage{Sage}[]{Python}
{morekeywords={True,False,sage,singular},
	sensitive=true}
\definecolor{dblackcolor}{rgb}{0.0,0.0,0.0}
\definecolor{dbluecolor}{rgb}{.01,.02,0.7}
\definecolor{dredcolor}{rgb}{0.8,0,0}
\definecolor{dgraycolor}{rgb}{0.30,0.3,0.30}
\newcommand{\dblack}{\color{dblackcolor}\bf}
\renewcommand{\emph}[1]{{\dblack{#1}}}
\newcommand{\keywords}[1]{\par\addvspace\baselineskip
	\noindent\keywordname\enspace\ignorespaces#1}
\begin{document}
	\title{\textbf{Trapdoor Delegation and HIBE from Middle-Product LWE in Standard Model}}
\titlerunning{Trapdoor Delegation  and \textsf{DMPLWE}-based \textsf{HIBE} }

	\author{Huy Quoc Le\inst{1,2}$^\textrm{(\Letter)}$, Dung Hoang Duong\inst{1}$^\textrm{(\Letter)}$, Willy Susilo\inst{1}, Josef Pieprzyk\inst{2,3}}
	
	\authorrunning{H. Q. Le, D. H. Duong, W. Susilo and J. Pieprzyk}
	\institute{
		Institute of Cybersecurity and Cryptology, School of Computing and Information Technology, University of Wollongong, 
		Northfields Avenue, Wollongong NSW 2522, Australia.\\
			\email{qhl576@uowmail.edu.au}, \email{\{hduong,wsusilo\}@uow.edu.au} 
	\and
	CSIRO Data61, Sydney, NSW, Australia.\\
	\and 
	Institute of Computer Science, Polish Academy of Sciences, Warsaw, Poland.\\
	\email{josef.pieprzyk@data61.csiro.au } 
	}

	\maketitle
\begin{abstract}
At CRYPTO 2017, Ro{\c{s}}ca, Sakzad,  Stehl\'e and Steinfeld introduced the Middle--Product LWE (MPLWE) assumption which is as secure as Polynomial-LWE for a large class of polynomials, making the corresponding cryptographic schemes more flexible in choosing the underlying polynomial ring in design while still keeping the equivalent efficiency.
Recently at TCC 2019, Lombardi, Vaikuntanathan and Vuong introduced a variant of MPLWE assumption and constructed the first IBE scheme based on MPLWE. Their core technique is to construct lattice trapdoors compatible with MPLWE in the same paradigm of Gentry, Peikert and Vaikuntanathan at STOC 2008. However, their method cannot directly offer a Hierachical IBE construction. In this paper, we make a step further by proposing a novel trapdoor delegation mechanism for an extended family of polynomials from which we construct, for the first time, a Hierachical IBE scheme from MPLWE. Our Hierachy IBE scheme is provably secure in the standard model.
\end{abstract}

\keywords{Middle--Product LWE, trapdoor, HIBE, standard model, lattices}

\section{Introduction}

 
Hierarchical identity-based encryption (\text{HIBE}) \cite{HL02, GS02a} is a variant of \text{IBE} \cite{Sha85},
which embeds a directed tree. The nodes of the tree are identities and
the children identities are produced by appending extra information to their parent identities. 
HIBEs can be found in many applications such as
forward-secure encryption \cite{CHK03}, broadcast encryption \cite{DF03, YFDL04}
and access control to pervasive computing information \cite{HS05} to name a few most popular.
 
In lattice-based cryptography, a crucial tool for constructing \text{IBE} and \text{HIBE} schemes is a trapdoor.
The GPV construction, for instance, applies \textit{trapdoor preimage sampleable functions} \cite{GPV08}.
The trapdoor plays a role of master secret key that is used to sample 
private key for each identity (following a distribution that is negligibly close to uniform).
This trapdoor is applied by Gentry et al.~\cite{GPV08} 
to construct their IBE from lattices in the random oracle model.
Using the same paradigm as~\cite{GPV08}, 
Agrawal et al.~\cite{AB09} introduced their IBE scheme in the standard model. 
Cash et al. \cite{CHKP10} define \textit{bonsai tree} with four basic principles in delegating a lattice basis 
(i.e., delegating a trapdoor in the \cite{GPV08} sense). 
The bonsai tree technique helps to resolve some open problems in lattice-based cryptography. 
It allows us to construct some lattice-based primitives in the standard model (without random oracles) 
as well as it facilitates delegation for purposes such as lattice-based 
\text{HIBE} schemes. 
At the same time, Agrawal et al.~\cite{ABB10} proposed two distinct trapdoor delegations 
following the definition of trapdoor from \cite{GPV08}. 
Their techniques have been used to construct a \text{HIBE} scheme in the standard model, 
which is more efficient than the one from~\cite{CHKP10}. 
Micciancio and Peikert in their work~\cite{MP12} introduced a simpler and 
more efficient trapdoor generation and delegation mechanism.

The middle-product learning with errors problem (\text{MPLWE}) is a variant of the polynomial learning with error problem (\text{PLWE}) 
proposed by Ro{\c{s}}ca et al. \cite{RSSS17}.
It exploits the middle-product of polynomials modulo $q$. 
The authors of  \cite{RSSS17} have proved that \text{MPLWE} is as secure as \text{PLWE} for a large class of polynomials.
This allows more flexibility in choosing underlying polynomial rings when designing cryptosystems.
In \cite{RSSS17}, the authors have constructed a Regev-type public key encryption scheme based on \text{MPLWE}, which is as efficient as that built over Ring-LWE~\cite{LPR10}. 
Recently, Lombardi et al. \cite{LVV19} have generalized \text{MPLWE} 
and call it degree-parametrized \text{MPLWE} (\text{DMPLWE}).
They have proved that \text{DMPLWE} is as hard as  $\text{PLWE}$ 
using similar arguments as in \cite{RSSS17}.  
Further, the authors of \cite{LVV19} have introduced a lattice trapdoor construction (following the trapdoor notion of \cite{MP12}) for \text{DMPLWE}.
The construction can be used to design a dual Regev encryption. The dual encryption allows the authors of \cite{LVV19} to come up with \text{IBE} constructions  in both the random oracle model and the standard model. The standard model \text{IBE} in \cite{LVV19} is adapted from the  framework of \cite{AB09}. 
However, a \text{DMPLWE}-based construction for a standard model \text{HIBE} 
cannot be directly obtained from the standard model \text{IBE} of \cite{LVV19}.
Thus there is a need for more work in order to define
an appropriate trapdoor delegation mechanism for the polynomial setting.
\vspace{2mm}

\noindent\textbf{Our contribution.}
In this paper, we follow the line of research initiated by the work~\cite{LVV19}.
In particular, we introduce a novel technique for delegating lattice trapdoors 
from \text{DMPLWE} and construct a new \text{HIBE} scheme based on \text{DMPLWE}.  Our \text{HIBE} scheme is provably secure in the standard model.
We follow the framework from~\cite{AB09} and~\cite{LVV19}.
 
Let $\overline{\textbf{a}}=(a_1, \cdots, a_{t'})$ be a $t'$-family of polynomials. We can interpret any polynomial as a structured matrix, e.g. Toeplitz matrix \cite{Pan01}, and hence $\overline{\textbf{a}}$ can be represented as a concatenated structured matrix, say $\textbf{A}$. 
The trapdoor from \cite{LVV19} is a modification of the trapdoor used in \cite{MP12} 
and is defined for a family of polynomials.  
More specifically, in \cite{LVV19}, a trapdoor for the  family $\overline{\textbf{a}}$ is a collection $\textsf{td}_a$ of 
\textit{short} polynomials (here \textit{short} means small coefficients), 
from which we form a matrix $\textbf{R}$ such that  $\mathbf{A}\cdot \bigl[ \begin{smallmatrix}
   \mathbf{R}\\ \mathbf{I}
 \end{smallmatrix} \bigr]=\mathbf{G},$ where $\mathbf{G}$ is the concatenated structured matrix of $\overline{\mathbf{g}}=(g_1, \cdots, g_{\gamma \tau})$, namely $g_j=2^\eta x^{d\zeta}$ for $j=\zeta \tau+\eta+1$ with $\eta \in \{0,\cdots, \tau-1\}$, $\zeta \in \{0,\cdots, \gamma-1 \}$. We call $\overline{\mathbf{g}}$ the \textit{primitive family}. 
The trapdoor $\mathsf{td}_a$ is used to search for a $t'$-family of polynomials $\overline{\textbf{r}}:=(r_1, \cdots, r_{t'})$ (following some distribution that is close to uniform) 
such that $\langle \overline{\textbf{a}},\overline{\textbf{r}} \rangle:=\sum_{i=1}^{t'}a_i\cdot r_i=u$ for any given polynomial $u$ of appropriate degree.
 
  For a construction of \text{DMPLWE}--based \text{HIBE}, we need to derive a trapdoor for an extended family of polynomials, say $\overline{\textbf{f}}=(\overline{\textbf{a}}| \overline{\textbf{h}})=(a_1, \cdots, a_{t'}| h_1, \cdots, h_{t''})$, from a trapdoor for $\overline{\textbf{a}}$. 
 To this end, we first proceed with the case $t''=\gamma \tau$, i.e., the number of polynomials in $\overline{\textbf{h}}$ has to be the same as the number in $\overline{\textbf{g}}$.
 We transform $\overline{\textbf{h}}$ into a matrix $\mathbf{H}$, and then apply the idea of trapdoor delegation from \cite{MP12} to obtain the trapdoor $\textsf{td}_f$ for $\overline{\textbf{f}}$. 
 We generalize the trapdoor delegation to the case $t''=m \gamma \tau$, a multiple of $\gamma \tau$ for $m \geq 1$. 
 
 Using the proposed polynomial trapdoor delegation, we build the first \text{HIBE} based on \text{DMPLWE}, which is provably \text{IND--sID--CPA} secure in the standard model. 
 To produce a private key for an identity $\textsf{id}=(id_1, \cdots, id_{\ell})$ at depth $\ell$, 
 we form an \textit{extended} family $\overline{\mathbf{f}}_{\mathsf{id}}=(\overline{\mathbf{a}}, \overline{\mathbf{h}}^{(1,id_1)},  \cdots,\overline{\mathbf{h}}^{(\ell,id_{\ell})})$ in which each $\overline{\mathbf{h}}^{(i,\mathsf{bit})}=(h_1^{(i,\mathsf{bit})}, \cdots,h_{t'}^{(i,\mathsf{bit})} )$ is a family of random polynomials.
Then our trapdoor delegation helps to get a trapdoor for $\overline{\mathbf{f}}_{\mathsf{id}}$, which plays the role of the private key with respect to the identity \textsf{id}. 
Deriving a private key for a child identity $\textsf{id}|id_{\ell+1}=(id_1, \cdots, id_{\ell}, id_{\ell+1})$ from a parent identity $\textsf{id}=(id_1, \cdots, id_{\ell})$ is done in similar way by appending $\overline{\mathbf{h}}^{(\ell+1,id_{\ell+1})}$ to $\overline{\mathbf{f}}_{\mathsf{id}}$  
so we get $\overline{\mathbf{f}}_{\mathsf{id}|id_{\ell+1}}=(\overline{\mathbf{a}},  \overline{\mathbf{h}}^{(1,id_1)},  \cdots,\overline{\mathbf{h}}^{(\ell,id_{\ell})},\overline{\mathbf{h}}^{(\ell+1,id_{\ell+1})})$.
Then we use the trapdoor delegation to get its private key from the private key (trapdoor) of $\overline{\mathbf{f}}_{\mathsf{id}}$. 
In order for the security proof to work, we need to put a condition on $t'$ 
such  that $t'$ is a multiple of $\gamma \tau$. 
Indeed, the condition ensures that the simulator is able to simulate an answer 
to a private key query of the adversary.
The answer is generated
 using a trapdoor for some $\overline{\mathbf{h}}^{(i,id_{\i})}$, both of which are not chosen randomly but produced by a trapdoor generator.\\
 
 \noindent\textbf{Open problems.} Our trapdoor delegation technique is restricted to the relation of the number of polynomials in the primitive family $\overline{\textbf{g}}$ and the number of polynomials in the extended family (i.e., $t''=m \gamma \tau$, a multiple of $\gamma \tau$).  It would be interesting and might be useful to find a new trapdoor delegation method that could be applied for an arbitrary $t''\geq 1$. Moreover, 
 if we had another mechanism that would help to find a trapdoor for $\overline{\mathbf{f}}'_{\mathsf{id}}$, where $\overline{\mathbf{f}}'_{\mathsf{id}}=(\overline{\mathbf{a}}, \langle \overline{\mathbf{h}}^{(1,id_1)}, \overline{\mathbf{b}} \rangle ,  \cdots, \langle \overline{\mathbf{h}}^{(\ell,id_{\ell})}, \overline{\mathbf{b}} \rangle ),$
 given a random $\overline{\mathbf{b}}$ and its trapdoor $\mathsf{td}_b$, then we might be able to apply the \text{HIBE} framework of \cite{ABB10} to get a smaller ciphertext size than that of our work here.    One more question is that whether or not there
exists a trapdoor (and delegation) method that does not utilise the Toeplitz representation
but applies directly polynomials, with a relevant definition of polynomial trapdoor.

 \noindent\textbf{Organisation.}  In Section \ref{sec2}, we review some related background. The trapdoor delegation mechanism for polynomials in MPLWE setting will be presented in Section \ref{trap}.  We will give an MPLWE-based HIBE construction in the standard model in Section \ref{sec4}. Section \ref{sec5} concludes this work.
   
\section{Preliminaries} \label{sec2}

\subsubsection{Notations.} 

 We denote by $R^{<n}[x]$ the set of polynomials of degree less than $n$ with coefficients in a commutative  ring $R$.
 We mainly work with the rings of polynomials over $\mathbb{Z}$ such as $\mathbb{Z}[x]$
 and $\mathbb{Z}_q[x]$.  We use italic small letters for  polynomials in $R$.
For a positive integer $\ell$, $[\ell]$ stands for the set $\{1, 2, \cdots , \ell\}$. 
The Gram-Schmidt orthogonal matrix of a matrix $\mathbf{A}$ is written as $\tilde{\mathbf{A}}$. 
We call $\overline{\mathbf{h}}$ an $n$-family (or $n$-vector) of polynomials 
if  $\overline{\mathbf{h}}=(h_1,\cdots, h_n)$, where $h_i$'s are polynomials. 
By $\overline{\mathbf{a}}|\overline{\mathbf{h}}$, we denote a concatenated (or expanded) family,
which consists of all ordered  polynomials from both $\overline{\mathbf{a}}$ and $\overline{\mathbf{h}}$. For two $n$-families  of polynomials $ \overline{\textbf{a}}=(a_1, \cdots, a_n)$ and $\overline{\textbf{r}}=(r_1, \cdots, c_n)$,  their scalar product is defined as $\langle \overline{\textbf{a}},\overline{\textbf{r}} \rangle:=\sum_{i=1}^{n}a_i\cdot r_i$. 
 The notation $\mathcal{U}(X)$ stands for the uniform distribution over the set $X$.
The Euclidean and sup norms of a vector $\textbf{u}$ (as well as a matrix) are
written as $\Vert \mathbf{u} \Vert$ and $ \Vert \mathbf{u} \Vert_{\infty}$, respectively.

\subsection{\text{IBE} and \text{HIBE}: Syntax and Security}
\subsubsection{Syntax.} 
An \text{IBE} system \cite{Sha85} is a tuple of algorithms \{\textsf{Setup}, \textsf{Extract}, \textsf{Encrypt}, \textsf{Decrypt}\}, in which: (1)
\textsf{Setup}($1^n$) on input a security parameter $1^n$,
		outputs a master public key $\textsf{MPK}$ and a master secret key $\textsf{MSK}$; (2) 
\textsf{Extract}(\textsf{MSK}, \textsf{id})  on input the master secret key $\textsf{MSK}$
		 and an identity $\textsf{id}$, 
		 outputs a private key $\textsf{SK}_{\textsf{id}}$; (3) 
	\textsf{Encrypt}($\textsf{MPK},\textsf{id}, \mu$) on input the master public key 
		$\textsf{MPK}$, an identity $\textsf{id}$ and a message $\mu$, outputs a ciphertext $\textsf{CT}$; and (4)
	\textsf{Decrypt}($\textsf{id}, \textsf{SK}_\textsf{id}, \textsf{CT}$)
		on input an identity $\textsf{id}$ and its associated private key 
		$\textsf{SK}_\textsf{id}$ and a ciphertet \textsf{CT}, outputs a message $\mu$.

A \text{HIBE} \cite{GS02a} is a tuple of algorithms \{\textsf{Setup}, \textsf{Extract}, \textsf{Derive}, \textsf{Encrypt}, \textsf{Decrypt}\}, 
where \textsf{Setup}, \textsf{Extract}, \textsf{Encrypt}, \textsf{Decrypt}
are defined in similar way as for \text{IBE}. 
Let $\lambda$ be the maximum depth of identities. 
An identity at depth $\ell \leq \lambda$ is represented by a binary vector $\textsf{id}=(id_{1}, \cdots, id_{\ell}) \in \{0, 1\}^\ell$ of dimension $\ell$ and it is considered as the ``parent" of the appended $\textsf{id}|id_{\ell+1}=(id_1, \cdots, id_{\ell}, id_{\ell+1})$. 
The algorithm \textsf{Setup}($1^n$, $1^\lambda$) needs a slight modification as
it accepts both $n$ and $\lambda$ as the input. 
For the input: private key $\textsf{SK}_\textsf{id}$ and  $\textsf{id}|\textsf{id}_{\ell+1}$,
the algorithm $\textsf{Derive}(\textsf{SK}_\textsf{id}, \textsf{id}|\textsf{id}_{\ell+1})$
outputs the private key $\textsf{SK}_{\textsf{id}|id_{\ell+1}}$ for the identity $\textsf{id}|id_{\ell+1}$.
If we consider the master secret key as the private key for any identity at depth $0$,  
then $\textsf{Derive}$ has the same function as  $\textsf{Extract}$.
\text{(H)IBE} has to be correct in the following sense:
$$\Pr[\textsf{Decrypt}(\textsf{id}, \textsf{SK}_{\textsf{id}}, \textsf{Encrypt}(\textsf{MPK},\textsf{id}, \mu))]=1-\textsf{negl}(n),$$
where the probability is taken over random coin tosses for \textsf{Setup}, \textsf{Extract}, \textsf{Encrypt}, \textsf{Decrypt} (for \text{IBE}) and \textsf{Derive} (for \text{HIBE}).  

\subsubsection{Security.} 
For the purpose of our paper, we present  
the following security  game for \text{IND-sID-CPA} or indistinguishability of ciphertexts under a selective chosen-identity and adaptive chosen-plaintext attack.  
In the game, the adversary  has to announce his target identity at the very beginning.
For a security parameter $n$, let $\mathcal{M}_{n}$ and $\mathcal{C}_n$  be 
the plaintext and ciphertext spaces, respectively. The game consists of six phases as follows:
\begin{itemize}
	\item \textbf{Initialize:} 
 The challenger chooses a maximum depth $\lambda$ and gives it to the adversary. The adversary outputs a target identity $\mathsf{id}^*=(id^*_1,\cdots, id^*_k), (k\leq \lambda)$.  
	\item \textbf{Setup:}
		The challenger runs $ \mathsf{Setup}( 1^{n}, 1^{\lambda})$ and sends the public parameters $\mathsf{MPK}$ to the adversary. The master secret key $\mathsf{MSK}$ is kept secret by the challenger. 
	\item \textbf{Queries 1:} The adversary makes private key queries adaptively. 
	The queries are for  identities $\mathsf{id}$ of the form $\mathsf{id}=(id_1,\cdots, id_m)$ 
	for some $m\leq \lambda$, which are not a prefix of $\mathsf{id}^*$.
	This is to say that $\mathsf{id}_i\neq \mathsf{id}^*_i$ for all $i\in[m]$ and $m\leq k$.
	The challenger answers the private key query for $\mathsf{id}$ by calling the 
	private key extraction algorithm $\mathsf{Extract}$ and sends 
	the key to the adversary.
	\item \textbf{Challenge:} 
	\begin{itemize}
		\item Whenever the adversary decides to finish Queries 1, 
			he  will output the challenge plaintext $\mu^* \in \mathcal{M}_n$.  
		\item The challenger chooses a random bit $b \in \{0,1 \}$. 
			It computes the challenge  ciphertext $\mathsf{CT}^*$. 
			If $b=0$, it calls the encryption algorithm and gets   
			$\mathsf{CT}^* \leftarrow \mathsf{Encrypt}(\mathsf{MPK},\mathsf{id}^*,\mu^*)$. 
			If $b=1$,  it chooses a random $CT \in \mathcal{C}_n$ so
			$ \mathsf{CT}^*  \leftarrow CT$. 
			$ \mathsf{CT}^*$ is then sent to the adversary.
	\end{itemize}
	\item \textbf{Queries 2:}
	The adversary makes the private key queries again and the challenger answers the queries as in \textbf{Queries 1}.
	\item \textbf{Guess:}  The adversary outputs a guess $b' \in \{0,1 \}$ and he wins if $b'=b$.

\end{itemize}
The adversary in the above game is referred to as an $\text{INDr-sID-CPA}$ adversary. The advantage of an adversary $\mathcal{A}$ in the  game is 
$\mathsf{Adv}^{\mathsf{HIBE}, \lambda,\mathcal{A}}(n)=|\Pr[b=b']-1/2|.$

\begin{definition}[$\text{IND-sID-CPA}$] A depth $\lambda$  HIBE system $\mathcal{E}$ is selective-identity indistinguishable from random if for any   probabilistic polynomial time (PPT) $\text{INDr-sID-CPA}$ adversary $\mathcal{A}$, the function $\mathsf{Adv}^{\mathsf{HIBE}, \lambda,\mathcal{A}}(n)$ is negligible. We say that  $\mathcal{E}$ is secure for 
the depth $\lambda$.
\end{definition}


\subsection{Lattices and Gaussian Distributions} \label{lattice}
 For positive integers $n, m, q$ and a matrix $\mathbf{A} \in \mathbb{Z}^{n \times m}$, 
We consider lattices 
  $  \Lambda_q^{\bot}(\mathbf{A})=\{\mathbf{z} \in \mathbb{Z}^m: \mathbf{A}\mathbf{z}=\mathbf{0} \text{ (mod } q) \}$
  $   \Lambda_q^{\mathbf{u}}(\mathbf{A})=\{\mathbf{z} \in \mathbb{Z}^m: \mathbf{A}\mathbf{z}=\mathbf{u} \text{ (mod } q) .$
If $\Lambda_q^{\mathbf{u}}(\mathbf{A}) \neq \emptyset$ then $\Lambda_q^{\mathbf{u}}(\mathbf{A})$ is a shift of $\Lambda_q^{\bot}(\mathbf{A})$. Specifically, if there exists $\mathbf{e}$ such that $\mathbf{A}\mathbf{e}=\mathbf{u} \mbox{ (mod } q)$ then $\Lambda_q^{\mathbf{u}}(\mathbf{A})=\Lambda_q^{\bot}(\mathbf{A})+\mathbf{e}.$

\begin{definition}[Gaussian Distribution]  Given countable set $S \subset \mathbb{R}^n$ and $\sigma>0$, the discrete Gaussian distribution $D_{S,\sigma,\mathbf{c}}$ over $S$ centered at some $\mathbf{c} \in S$ with standard deviation $\sigma$ is defined as $\mathcal{D}_{S, \sigma,\mathbf{c}}(\mathbf{x}):=\rho_{\sigma,\mathbf{c} }(\mathbf{x})/\rho_{\sigma,\mathbf{c} }(S),$ where $\rho_{ \sigma,\mathbf{c}}(\mathbf{x}):=\exp(\frac{-\pi \Vert \mathbf{x}-\mathbf{v}\Vert^2 }{\sigma^2})$ and $\rho_{\sigma,\mathbf{c}}(S):=\sum_{\mathbf{x}\in S}\rho_{\sigma,\mathbf{c}}(\mathbf{x})$. If $\mathbf{c}=\mathbf{0}$, we simply write $\rho_{\sigma}$ and $D_{S, \sigma}$ instead of $\rho_{\sigma,\mathbf{0}}$, $D_{S, \mathbf{0}, \sigma}$, respectively.
	
\end{definition}

We use of the following tail bound of $D_{\Lambda,\sigma}$ for parameter $\sigma$ 
sufficiently larger than the {\it smothing parameter} $\eta_\epsilon(\Lambda)$, defined to be the the smallest real number $s$ such that $\rho_{1/s}(\Lambda^{*}\setminus\{0\})\leq \epsilon$; cf.~\cite{MR07}.

\begin{lemma}[{\cite[ Lemma 2.9]{GPV08}}] \label{lem4}
	For any $\epsilon>0$, any $\sigma \geq \eta_{\epsilon}(\mathbb{Z})$, and any $K>0$, we have
	$\Pr_{x\leftarrow D_{\mathbb{Z},\sigma, c}}[\vert x-c\vert\geq K\cdot \sigma] \leq 2 e^{-\pi K^2}\cdot \frac{1+\epsilon}{1-\epsilon}.$
	In particular, if $\epsilon \in (0, \frac{1}{2})$ and $K \geq \omega(\sqrt{\log n})$,
	 then the probability that $|x-c|\geq K\cdot\sigma$ is negligible in $n$.
\end{lemma}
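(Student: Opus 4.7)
The plan is to rewrite the probability as a ratio of Gaussian masses and bound the numerator and denominator separately: the smoothing-parameter hypothesis controls the denominator, and a Banaszczyk-style tail estimate controls the numerator. By definition of $D_{\mathbb{Z},\sigma,c}$, after the substitution $z=x-c$,
\[
\Pr_{x \leftarrow D_{\mathbb{Z},\sigma,c}}\bigl[|x-c|\geq K\sigma\bigr] \;=\; \frac{\rho_\sigma\bigl((\mathbb{Z}-c)\setminus(-K\sigma, K\sigma)\bigr)}{\rho_\sigma(\mathbb{Z}-c)},
\]
so the task reduces to lower-bounding the denominator and upper-bounding the numerator in this single ratio.

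For the denominator I would invoke the standard coset-mass consequence of the smoothing parameter (Micciancio--Regev): since $\sigma \geq \eta_\epsilon(\mathbb{Z})$, every shift $c \in \mathbb{R}$ satisfies $\rho_\sigma(\mathbb{Z}-c) \geq \tfrac{1-\epsilon}{1+\epsilon}\,\rho_\sigma(\mathbb{Z})$. This is derived from the defining inequality $\rho_{1/\sigma}(\mathbb{Z}^*\setminus\{0\})\leq \epsilon$ by Poisson summation, and it is the only step where the hypothesis on $\sigma$ enters. For the numerator I would apply a Banaszczyk-type tail bound adapted to the shifted coset $\mathbb{Z}-c$, yielding $\rho_\sigma\bigl((\mathbb{Z}-c)\cap\{z:|z|\geq K\sigma\}\bigr) \leq 2\,e^{-\pi K^2}\,\rho_\sigma(\mathbb{Z})$. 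Dividing the two estimates yields the claimed prefactor $\frac{2(1+\epsilon)}{1-\epsilon}\,e^{-\pi K^2}$. The ``in particular'' clause then follows immediately: for $\epsilon<\tfrac{1}{2}$ the factor $(1+\epsilon)/(1-\epsilon)$ is a constant $\leq 3$, while $K \geq \omega(\sqrt{\log n})$ forces $e^{-\pi K^2}=n^{-\omega(1)}$, which is negligible in $n$.

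The main obstacle I anticipate is executing the Banaszczyk-style tail bound on the shifted coset $\mathbb{Z}-c$ with a clean constant of $2$. Banaszczyk's original argument is phrased for a lattice $\Lambda$ itself, but the one-dimensional cosetted version can be obtained by combining Poisson summation (the shift $c$ appears only as a phase that vanishes under the absolute value in the tail estimate) with the standard exponential-tilt trick that reweights the Gaussian mass outside $[-K\sigma, K\sigma]$ and compares it to the full mass of a rescaled lattice. A more elementary alternative, which should give the same constant up to lower-order terms, is a direct integral comparison $\sum_{z\in \mathbb{Z}-c,\,|z|\geq K\sigma} e^{-\pi z^2/\sigma^2} \leq 2\sigma\, e^{-\pi(K-1/\sigma)^2}$ together with the Poisson-summation lower bound $\rho_\sigma(\mathbb{Z})\geq \sigma$; either route makes all remaining steps routine.
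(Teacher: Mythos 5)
Your proposal is correct and is essentially the canonical argument: the paper itself does not prove this lemma but imports it verbatim from \cite{GPV08} (Lemma 2.9), whose proof proceeds exactly as you describe --- write the probability as the ratio $\rho_\sigma\bigl((\mathbb{Z}-c)\setminus(-K\sigma,K\sigma)\bigr)/\rho_\sigma(\mathbb{Z}-c)$, lower-bound the denominator by $\tfrac{1-\epsilon}{1+\epsilon}\rho_\sigma(\mathbb{Z})$ via the smoothing parameter, and upper-bound the numerator by $2e^{-\pi K^2}\rho_\sigma(\mathbb{Z})$ via the exponential-tilt (Banaszczyk-style) estimate combined with the fact that a coset's Gaussian mass never exceeds the lattice's. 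The only caveat is that your ``more elementary'' integral-comparison alternative yields $e^{-\pi(K-1/\sigma)^2}$ rather than the clean $e^{-\pi K^2}$, so the tilt route you describe first is the one to keep.
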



\subsection{Degree-Parametrized Middle-Product Learning with Errors}

\begin{definition}[{Middle-Product, \cite[Definition 3.1]{RSSS17}}] Let $d_a$, $d_b,k,d$ be integers such that $d_a+d_b-1=2k+d$. We define the middle-product of two polynomials $a \in \Z^{<d_a}[x]$ and $b \in \Z^{<d_b}[x]$ as follows: 
\end{definition}
\begin{equation}\label{eq11}
\odot_d:  \Z^{<d_a}[x] \times \Z^{<d_b}[x] \rightarrow \Z^{<d}[x], 
(a,b) \mapsto \left\lfloor \frac{ab\mod x^{k+d}}{x^k} \right \rfloor.
\end{equation}

\begin{lemma}[{\cite[Lemma 3.3]{RSSS17}}] \label{lem3}
Let $d, k, n>0$. For all $r \in R^{<k+1}[x]$, $a\in R^{<n}[x]$, $s \in R^{<n+d+k-1}[x]$, it holds that
$r\odot_d(a \odot_{d+k}s)=(r\cdot a)\odot_{d}s. $
\end{lemma}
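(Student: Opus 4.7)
The plan is to reduce both sides of the claimed identity to the same explicit triple sum over coefficients of $r$, $a$, and $s$, and then compare term by term.

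First I would unpack the middle-product definition: whenever $d_a + d_b - 1 = 2k + d$, the $i$-th coefficient of $a \odot_d b$ is exactly the coefficient of $x^{k+i}$ in the ordinary product $a \cdot b$, for $i \in \{0,\ldots,d-1\}$. Applying this to each middle product occurring in the statement, I would read off the shift parameters: $n-1$ for $a \odot_{d+k} s$ (since $n + (n+d+k-1) - 1 = 2(n-1) + (d+k)$); $k$ for the outer $r \odot_d(\cdot)$ (since $(k+1) + (d+k) - 1 = 2k + d$, using that $a \odot_{d+k} s$ has degree $< d+k$); and $n+k-1$ for $(r\cdot a) \odot_d s$ (since $(n+k) + (n+d+k-1) - 1 = 2(n+k-1) + d$).

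Next I would expand the $i$-th coefficient of each side, for $i \in \{0,\ldots,d-1\}$, as a triple convolution in the coefficients of $r$, $a$, $s$. For the right-hand side this is immediate: $[(r \cdot a) \odot_d s]_i = [x^{n+k-1+i}](r \cdot a \cdot s) = \sum r_\alpha\, a_\beta\, s_\gamma$, summed over $\alpha+\beta+\gamma = n+k-1+i$ with $\alpha \in [0,k]$, $\beta \in [0,n-1]$, $\gamma \in [0, n+d+k-2]$. For the left-hand side, I would substitute $[a \odot_{d+k} s]_j = \sum_{\beta+\gamma = n-1+j} a_\beta s_\gamma$ and then write $[r \odot_d(a \odot_{d+k} s)]_i = \sum_{\alpha=0}^{k} r_\alpha \cdot [a \odot_{d+k} s]_{k+i-\alpha}$; expanding and setting $j := k+i-\alpha$ yields the identical triple sum.

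The only step requiring care, and which I expect to be the main (minor) obstacle, is the index bookkeeping. One must check that as $\alpha$ ranges over $[0,k]$ and $i$ over $[0,d-1]$, the index $j = k+i-\alpha$ always lies in $[0, d+k-1]$, so that the modular truncation implicit in $a \odot_{d+k} s$ discards no contributing term; dually, one must verify that the constraint $\alpha \in [0,k]$ together with $\alpha+\beta+\gamma = n+k-1+i$ automatically forces $\gamma \in [n-1, n+d+k-2]$, which is precisely the window of $s$-coefficients that the inner middle product exposes. Both verifications reduce to a pair of elementary inequalities, after which the two triple sums coincide and the identity follows.
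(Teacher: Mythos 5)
Your proof is correct: the coefficient-level triple-sum comparison, together with the two index checks you flag (that $j=k+i-\alpha$ stays in $[0,d+k-1]$, and that the inner middle product's truncation therefore discards no contributing term), is exactly the standard argument for this identity; the paper itself gives no proof and simply cites \cite[Lemma 3.3]{RSSS17}, whose proof proceeds by the same direct computation. One small slip: the quantity forced into the window $[n-1,n+d+k-2]$ is the combined index $\beta+\gamma$ of the product $a\cdot s$ (not $\gamma$ alone), but the window you state is the correct one and the verification goes through as you describe.
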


\begin{definition}[{\text{DMPLWE}, \cite[Definition 9]{LVV19}}] 
Let $n'>0$, $q\geq 2$, $\mathbf{d}=(d_1, \cdots, d_{t'})\in [\frac{n'}{2}]^{t'}$, and let $\chi$ be a distribution over $\mathbb{R}_q$. For $s\in \mathbb{Z}_q^{<n'-1}[x]$, we define the distribution $\mathsf{DMP}_{q,n',\mathbf{d}, \chi}(s)$ over $\prod_{i=1}^{t'}(\mathbb{Z}_q^{n'-d_i}[x] \times \mathbb{R}_q^{d_i}[x] )$ as follows:
\begin{itemize}
\item For each $i\in [t']$, sample $f_i  \xleftarrow[]{\$}\mathbb{Z}_q^{<n'-d_i}[x]$ and sample $e_i \leftarrow \chi^{d_i}[x]$ (represented as a polynomial of degree less than $d_i$).
\item Output $(f_i,\mathsf{ct}_i:=f_i\odot_{d_i}s+e_i)_{i \in [t']}$.
\end{itemize}
The degree-parametrized $\mathsf{MPLWE}$ (named $\mathsf{DMPLWE}_{q,n,\textbf{d}, \chi}$) requires to distinguish between arbitrarily many samples from $\mathsf{DMP}_{q,n',\mathbf{d}, \chi}(s)$ and the same number of samples from $\prod_{i=1}^{t'}\mathcal{U}(\mathbb{Z}_q^{n'-d_i}[x] \times \mathbb{R}_q^{d_i}[x] ).$
\end{definition}

For $\mathcal{S}> 0$, let $\mathcal{F}(\mathcal{S}, \mathbf{d}, n)$ be the set of monic polynomials $f$ in $\mathbb{Z}[x]$ with the constant coeficient coprime with $q$, that have degree $m \in \cap_{i=1}^{t'}[d_i, n-d_i]$ and satisfy $\textsf{EF}(f)<\mathcal{S}$. 
For a polynomial $f \in \mathbb{Z}[x]$ of degree $m$, $\mathsf{EF}(f)$ is the {\it expansion factor} (\cite{LM06}) of $f$ defined as follows:
$\mathsf{EF}(f):=\max_{g\in \mathbb{Z}^{<2m-1}[x]}\frac{\Vert g\mod f\Vert_{\infty}}{\Vert g \Vert_{\infty}}.$
Following \cite{RSSS17}, Lombardi et al. \cite{LVV19} showed that $\mathsf{DMPLWE}$ is as hard as $\mathsf{PLWE}_{q, \chi}^{(f)}$ (defined below) for any polynomial $f$ of $\textsf{poly}(n)$-bounded expansion factor.

\begin{definition}[{\text{PLWE},\cite{SSTX09}}] 
Let $n>0$, $q\geq 2$, $f$ be a polynomial of degree $m$, $\chi$ be a distribution over $\mathbb{R}[x]/f$. The decision problem $\mathsf{PLWE}^{(f)}_{q, \chi}(s)$ is to distinguish between arbitrarily many samples 
$\{(a, a\cdot s+e): a \xleftarrow[]{\$}\mathbb{Z}_q[x]/f, e\leftarrow \chi\},$
and the same number of samples from $\mathcal{U}(\mathbb{Z}_q[x]/f \times \mathbb{R}_q[x]/f)$ over the randomness of $s \xleftarrow[]{\$}\mathbb{Z}_q[x]/f$.
\end{definition}
It is proven that $\mathsf{PLWE}^{(f)}_{q, \chi}(s)$ is as hard as solving Shortest Vector Problem (SVP) over ideal lattices in $\mathbb{Z}[x]/f$; see~\cite{SSTX09} for more detail.

\begin{theorem}[{Hardness of \text{DMPLWE}, \cite[Theorem 2]{LVV19}}] Let $n'>0$, $q\geq 2$, $\mathbf{d}=(d_1, \cdots, d_{t'})\in [\frac{n'}{2}]^{t'}$, and $\alpha \in (0,1)$. Then, there exists a probabilistic polynomial time (PPT) reduction from $\mathsf{PLWE}^{(f)}_{q, D_{\alpha \cdot q}}$ for any polynomial $f$ in $\mathcal{F}(\mathcal{S}, \mathbf{d}, n)$ to $\mathsf{DMPLWE}_{q,n',\textbf{d}, D_{\alpha' \cdot q}}(s)$ with $\alpha'=\alpha\mathcal{S} \sqrt{\frac{n'}{2}}$. 

\end{theorem}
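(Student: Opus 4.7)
The plan is to adapt the $\mathsf{PLWE}$-to-$\mathsf{MPLWE}$ reduction of Ro\c{s}ca, Sakzad, Stehl\'e and Steinfeld~\cite{RSSS17} to the degree-parametrized setting. The core engine is Lemma~\ref{lem3}: a middle product with a short polynomial $r$ commutes with another middle product. This identity allows one to ``reshape'' fresh $\mathsf{PLWE}^{(f)}_{q,D_{\alpha q}}(s)$ samples into a $t'$-tuple of $\mathsf{DMPLWE}_{q,n',\mathbf{d},D_{\alpha' q}}(\tilde{s})$ samples, one coordinate at a time, while tracking how the error grows.

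The first step is to lift the PLWE secret $s \in \mathbb{Z}_q[x]/f$ to a single polynomial $\tilde{s} \in \mathbb{Z}_q^{<n'-1}[x]$ by taking the canonical representative of degree less than $m$; this $\tilde{s}$ serves simultaneously as the DMPLWE secret for every coordinate $i \in [t']$, which is possible because $m \in \bigcap_{i\in[t']}[d_i, n'-d_i]$ (built into the definition of $\mathcal{F}(\mathcal{S},\mathbf{d},n)$). Next, for each $i\in[t']$, I would take a fresh $\mathsf{PLWE}^{(f)}$ sample $(a_i, b_i = a_i s + e_i \bmod f)$, pick a uniform short polynomial $r_i \in \mathbb{Z}^{<n'-m-d_i+1}[x]$, and output
\[(f_i, \mathsf{ct}_i) := \bigl(r_i \cdot a_i \bmod q,\; r_i \odot_{d_i} b_i + e_i^{\mathrm{sm}}\bigr),\]
where $e_i^{\mathrm{sm}}$ is a fresh Gaussian smoothing noise. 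Lemma~\ref{lem3} then yields $r_i \odot_{d_i}(a_i \cdot s) = (r_i a_i) \odot_{d_i} \tilde{s} = f_i \odot_{d_i} \tilde{s}$, and the price of passing from $a_i \cdot s$ to $a_i \cdot s \bmod f$ is an additive term whose $\mathbb{Z}$-norm is controlled by $\mathsf{EF}(f) < \mathcal{S}$. Thus $\mathsf{ct}_i = f_i \odot_{d_i} \tilde{s} + e_i'$ for a suitably small $e_i'$.

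Two distributional checks remain. First, the component $f_i = r_i a_i \bmod q$ must be statistically close to $\mathcal{U}(\mathbb{Z}_q^{<n'-d_i}[x])$; the coprimality of $f(0)$ with $q$ is what enables the same leftover-hash-style argument used in~\cite{RSSS17} to transfer uniformity of $r_i$ onto $f_i$. Second, the effective error $e_i' = r_i \odot_{d_i} e_i + e_i^{\mathrm{sm}} + (\text{reduction-mod-}f\text{ term})$ must be close to $D_{\alpha' q}^{d_i}$ with $\alpha' = \alpha \mathcal{S} \sqrt{n'/2}$; the factor $\mathcal{S}$ absorbs the expansion-factor blow-up coming from reducing $a_i s$ modulo $f$, while the factor $\sqrt{n'/2}$ reflects the norm of the middle product $r_i \odot_{d_i} e_i$ for $r_i$ of degree roughly $n'/2$. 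The main technical obstacle I anticipate is calibrating the distributions of the $r_i$'s and of $e_i^{\mathrm{sm}}$ so that both conditions hold \emph{simultaneously} across all $t'$ coordinates despite the differing $d_i$'s; this is precisely the step that distinguishes the DMPLWE reduction from the MPLWE one of~\cite{RSSS17}, and it is handled by exploiting the uniform bound $d_i \le n'/2$ together with a common smoothing noise parameter that suffices for every $D_{\mathbb{Z},\alpha q}^{d_i}$ at once, following the template of~\cite{LVV19}.
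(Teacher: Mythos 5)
This theorem is imported by the paper verbatim from \cite{LVV19} (Theorem 2 there) and is not proved in the paper at all, so there is no in-paper argument to compare against; judged on its own terms, your proposed reduction has two concrete flaws that I do not see a way to repair within your framework. The central one is the step where you pass from $a_i\cdot s$ to $a_i\cdot s \bmod f$ ``at the price of an additive term whose norm is controlled by $\mathsf{EF}(f)$.'' That is not what the expansion factor gives you: $\mathsf{EF}(f)$ bounds $\Vert g \bmod f\Vert_\infty$ in terms of $\Vert g\Vert_\infty$ for the \emph{same} $g$, and is useful for small $g$ (error terms); it does not make $g$ and $g\bmod f$ additively close. Here $a_i s$ and $a_i s \bmod f$ differ by a multiple of $f$ whose coefficients are full-size modulo $q$, so $r_i\odot_{d_i} b_i$ bears no controllable relation to $(r_i a_i)\odot_{d_i}\tilde s$. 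Relatedly, Lemma~\ref{lem3} requires the inner object to be a genuine middle product $a\odot_{d+k}s$ of unreduced polynomials; the PLWE sample hands you $a_i s + e_i \bmod f$, which is neither a middle product nor the unreduced product, so the associativity identity simply does not apply to it. The actual reduction of \cite{RSSS17}, which \cite{LVV19} runs once per coordinate $d_i$, works the other way around: it lifts $a_i$ uniformly along the surjection $a'\mapsto a'\bmod f$ from $\mathbb{Z}_q^{<n'-d_i}[x]$ onto $\mathbb{Z}_q[x]/f$, and conjugates the secret by a fixed invertible matrix $\mathbf{M}_f$ (this is where $f(0)$ coprime to $q$ and monicity enter) so that the mod-$f$ product becomes exactly a middle product with the transformed secret; the factor $\mathcal{S}\sqrt{n'/2}$ then arises from the norm of the deterministic linear map applied to the error, not from a random short multiplier.

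The second flaw is the uniformity claim for $f_i = r_i a_i \bmod q$. With a single $a_i$ uniform over $\mathbb{Z}_q[x]/f$ (about $q^m$ values) and $r_i$ short, the pair $(r_i,a_i)$ carries far less than $(n'-d_i)\log_2 q$ bits of entropy, so no leftover-hash argument can make $f_i$ close to uniform over $\mathbb{Z}_q^{<n'-d_i}[x]$; and if you instead let $r_i$ be full-size to restore entropy, then $r_i\odot_{d_i}e_i$ is no longer small and the error analysis collapses. This tension is structural in your approach, whereas the lift-along-reduction-mod-$f$ map used in \cite{RSSS17,LVV19} has equal-size fibres and therefore maps uniform inputs to exactly uniform outputs with no hashing argument needed.
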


\subsection{Lattice Trapdoor Generation for DMPLWE} \label{trapdoor}

\begin{definition}[{G-Trapdoor, \cite[Definition 5.2]{MP12}}] \label{gtrap}
Let $\mathbf{A} \in \mathbb{Z}_q^{n \times m}$ and $\mathbf{G} \in \mathbb{Z}_q^{n \times m'}$ be matrices with $m \geq m' \geq n$. A matrix $\mathbf{R}\in \mathbb{Z}^{(m-m')\times m'}$ is called $\mathbf{G}$-trapdoor for $\mathbf{A}$ with tag $\mathbf{H}$ (which is an invertible matrix in $\mathbb{Z}_q^{n \times n}$) if
 $\mathbf{A}\cdot \begin{bmatrix}
   \mathbf{R}\\ \mathbf{I}_{m'}
 \end{bmatrix}=\mathbf{H}\mathbf{G}.$
\end{definition}

In particular,  it is suggested in \cite[Section 4]{MP12} that $\mathbf{G}=\mathbf{I}_n\otimes \begin{bmatrix}
1& 2&\cdots& 2^{k} 
\end{bmatrix}$. We can choose $\mathbf{H}=\mathbf{I}_n$ or such that $\mathbf{H}\mathbf{G}$ is any (column) permutation of $\mathbf{G}$ which is similar to the usage of $\mathbf{G}$ in \cite{LVV19}. In fact, it is defined in \cite[Definition 11]{LVV19}) that $\mathbf{A} \in \mathbb{Z}^{k \times (m+k\tau)}$ and  $\mathbf{G}:=\mathbf{I}_k \otimes[ 1 \mbox{ } 2 \cdots 2^{\tau-1}] \in \mathbb{Z}_q^{k \times  k\tau}$. However, $\mathbf{G}$ is used in $\mathsf{SamplePre}$ (see below) is actually a (column) permutation of $\mathbf{I}_k \otimes[ 1 \mbox{ } 2 \cdots 2^{\tau-1}] $ from which the authors can extracts polynomial $g_i$ in $\overline{\textbf{g}}$ thanks to the Toeplitz representation of polynomials (see Equation \eqref{k2}). We first recall their definition and some basic properties.

\begin{definition}[Toeplitz matrix] \label{top}
	Let $R$ be a ring and $d,k >0$ be integers.
	For any polynomial $u\in R^{<n}[x]$, we define the Topelitz matrix $\mathsf{Tp}^{n,d}(u)$ for $u$ as a matrix in $R^{(n+d-1) \times d}$ whose the $i$-th column is the coefficient vector of $x^{i-1}\cdot u$ arranged in increasing degree of $x$ with $0$ inserted if any. 
	
\end{definition}

By Definition \ref{top}, it is easy to assert the following Lemma.

\begin{lemma}\label{lem2} Let $u \in \mathbb{Z}^{<n}[x]$. Then,
	$$\mathsf{Tp}^{n,d}(u)=[\mathsf{Tp}^{n+d-1,1}(u)|\mathsf{Tp}^{n+d-1,1}(x\cdot u)|\cdots |\mathsf{Tp}^{n+d-1,1}(x^{d-1}\cdot u)].$$
	
\end{lemma}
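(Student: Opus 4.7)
The plan is to verify the claimed identity column-by-column, directly from Definition \ref{top}; there is essentially no obstacle here, only a careful unpacking of what the Toeplitz operator produces on each side.

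First I would recall, from Definition \ref{top}, that $\mathsf{Tp}^{n,d}(u)$ lives in $R^{(n+d-1)\times d}$ and that its $i$-th column (for $i \in \{1,\dots,d\}$) is, by definition, the coefficient vector of the polynomial $x^{i-1}\cdot u$, written in order of increasing degree and zero-padded up to length $n+d-1$. So the left-hand side is completely described once I know these $d$ column vectors.

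Next I would examine a generic block on the right-hand side. The matrix $\mathsf{Tp}^{n+d-1,\,1}(x^{i-1}\cdot u)$ has one column (because $d=1$ in the superscript), and it lives in $R^{((n+d-1)+1-1)\times 1}=R^{(n+d-1)\times 1}$, which matches the column length on the left. For the definition to apply I need $x^{i-1}\cdot u \in R^{<n+d-1}[x]$; since $u\in R^{<n}[x]$ and $i-1\le d-1$, we have $\deg(x^{i-1}u)\le (i-1)+(n-1)\le n+d-2<n+d-1$, so this is fine. Then, again by Definition \ref{top}, the single column of $\mathsf{Tp}^{n+d-1,\,1}(x^{i-1}\cdot u)$ is the coefficient vector of $x^{0}\cdot(x^{i-1}\cdot u)=x^{i-1}\cdot u$, arranged in increasing degree and zero-padded to length $n+d-1$.

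Comparing the two descriptions, the $i$-th column of $\mathsf{Tp}^{n,d}(u)$ coincides with $\mathsf{Tp}^{n+d-1,\,1}(x^{i-1}\cdot u)$ for every $i \in \{1,\dots,d\}$, which is exactly the claimed block-column decomposition. Since both sides have the same number of rows ($n+d-1$) and the same number of columns ($d$), and agree column-wise, the identity follows. The only ``hard part'' is bookkeeping the indices and making sure the degree bound $x^{i-1}u \in R^{<n+d-1}[x]$ is respected so that the right-hand side is well-defined; otherwise the lemma is an immediate rewriting of the definition.
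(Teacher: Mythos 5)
Your column-by-column verification is correct and is exactly the argument the paper has in mind: the paper offers no written proof, stating only that the lemma follows directly from Definition \ref{top}, and your unpacking (each column of $\mathsf{Tp}^{n,d}(u)$ is the length-$(n+d-1)$ coefficient vector of $x^{i-1}u$, which is precisely the single column of $\mathsf{Tp}^{n+d-1,1}(x^{i-1}u)$) is the intended justification. The degree-bound check ensuring $x^{i-1}u\in R^{<n+d-1}[x]$ is a welcome extra detail the paper leaves implicit.
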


\begin{lemma}[{\cite[ Lemma 7]{LVV19}}] \label{lem1}
	For positive integers $k, n, d$ and polynomials $u\in R^{<k}[x]$, 
	if $v\in R^{<n}[x]$, then
	$\mathsf{Tp}^{k,n+d-1}(u)\cdot \mathsf{Tp}^{n,d}(v)= \mathsf{Tp}^{k+n-1,d}(u\cdot v).$
\end{lemma}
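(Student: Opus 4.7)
The plan is to prove Lemma~\ref{lem1} by interpreting each Toeplitz matrix as the matrix (with respect to the monomial basis) of a polynomial multiplication map, and then observing that associativity of polynomial multiplication is precisely reflected in the associativity of matrix multiplication.

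First I would establish the following \emph{key interpretation}: for any $w \in R^{<m}[x]$ and $\ell \geq 1$, the matrix $\mathsf{Tp}^{m,\ell}(w) \in R^{(m+\ell-1)\times \ell}$ is the matrix of the $R$-linear multiplication map
\[
M_w : R^{<\ell}[x] \longrightarrow R^{<m+\ell-1}[x], \qquad p \mapsto w\cdot p,
\]
with respect to the monomial bases $(1,x,\ldots,x^{\ell-1})$ on the source and $(1,x,\ldots,x^{m+\ell-2})$ on the target. This follows directly from Lemma~\ref{lem2}: the $(i{+}1)$-st column of $\mathsf{Tp}^{m,\ell}(w)$ is by definition the coefficient vector of $x^{i}\cdot w$, which is exactly the image under $M_w$ of the $i$-th basis vector $x^{i}$.

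Next, I would apply this interpretation to both factors. The matrix $\mathsf{Tp}^{n,d}(v)$ represents $M_v : R^{<d}[x] \to R^{<n+d-1}[x]$, and $\mathsf{Tp}^{k,n+d-1}(u)$ represents $M_u : R^{<n+d-1}[x] \to R^{<k+n+d-2}[x]$. Because matrix multiplication corresponds to composition of linear maps (in compatible bases), the product $\mathsf{Tp}^{k,n+d-1}(u)\cdot \mathsf{Tp}^{n,d}(v)$ represents $M_u \circ M_v$, which by associativity of polynomial multiplication equals $M_{u\cdot v} : R^{<d}[x] \to R^{<k+n+d-2}[x]$. Since $u\cdot v \in R^{<k+n-1}[x]$, this same map $M_{u\cdot v}$ is represented by $\mathsf{Tp}^{k+n-1,d}(u\cdot v)$, and the lemma follows.

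If a fully column-wise argument were preferred (avoiding the linear-map language), I would instead use Lemma~\ref{lem2} to write
\[
\mathsf{Tp}^{n,d}(v)=\bigl[\mathsf{Tp}^{n+d-1,1}(v)\,\big|\,\mathsf{Tp}^{n+d-1,1}(xv)\,\big|\,\cdots\,\big|\,\mathsf{Tp}^{n+d-1,1}(x^{d-1}v)\bigr],
\]
so that the $i$-th column of the product equals $\mathsf{Tp}^{k,n+d-1}(u)\cdot \mathsf{Tp}^{n+d-1,1}(x^{i-1}v)$. I would then reduce to the special case $d=1$: show that $\mathsf{Tp}^{k,n+d-1}(u)\cdot \mathsf{Tp}^{n+d-1,1}(h)=\mathsf{Tp}^{k+n+d-2,1}(u\cdot h)$ for any $h \in R^{<n+d-1}[x]$, which is just the statement that multiplying the coefficient vector of $h$ on the left by the Toeplitz matrix of $u$ produces the coefficient vector of $u\cdot h$. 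Applying this with $h=x^{i-1}v$ gives the $i$-th column of $\mathsf{Tp}^{k+n-1,d}(u\cdot v)$ via Lemma~\ref{lem2} again.

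I do not expect any serious obstacle: the result is a bookkeeping identity and both approaches reduce to the single observation that a Toeplitz matrix is the matrix of a polynomial multiplication operator. The only mild subtlety is keeping the bounds $k+n-1$, $n+d-1$, and $k+n+d-2$ consistent so that the target bases align correctly on both sides of the claimed equality.
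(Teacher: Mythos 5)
Your proof is correct. Note first that the paper does not actually prove this statement: it is imported verbatim as \cite[Lemma~7]{LVV19}, so there is no in-paper argument to compare yours against. Your key interpretation is exactly right and is the standard way to see the identity: by Definition~\ref{top}, the $i$-th column of $\mathsf{Tp}^{m,\ell}(w)$ is the coefficient vector of $w\cdot x^{i-1}$, so $\mathsf{Tp}^{m,\ell}(w)$ is the matrix of the multiplication map $p\mapsto w\cdot p$ from $R^{<\ell}[x]$ to $R^{<m+\ell-1}[x]$ in the monomial bases; the product of the two Toeplitz matrices then represents $p\mapsto u\cdot(v\cdot p)=(u\cdot v)\cdot p$, and the dimension bookkeeping checks out ($u\cdot v\in R^{<k+n-1}[x]$, and both sides are $(k+n+d-2)\times d$ matrices). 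Your fallback column-wise argument via Lemma~\ref{lem2}, reducing to the $d=1$ case (Toeplitz matrix times coefficient vector equals the coefficient vector of the product), is an equivalent unwinding of the same observation and is also valid. No gaps.
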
 

\begin{theorem}[{\cite[Theorem 4]{LVV19}}]\label{kk}
	Let $\mathbf{G}:=\mathbf{I}_k \otimes \begin{bmatrix}
	1& 2&\cdots& 2^{\tau-1} 
	\end{bmatrix} \in \mathbb{Z}_q^{k \times  k\tau}$ and matrices $\mathbf{A} \in \mathbb{Z}^{k \times (m+k\tau)}$,  $\mathbf{R}\in \mathbb{Z}^{m\times k \tau}$ be such that
	$\mathbf{A}\cdot \bigl[ \begin{smallmatrix}
	\mathbf{R}\\ \mathbf{I}_{k \tau}
	\end{smallmatrix} \bigr]=\mathbf{G}.$
Then, there exists an efficient algorithm $\mathcal{P}=(\mathcal{P}_1, \mathcal{P}_2)$ that executes according to the two following phases: 
\begin{itemize}
	\item \textbf{offline}: $\mathcal{P}_1(\mathbf{A}, \mathbf{R}, \sigma)$ performs some polynomial-time preprocessing on input $(\mathbf{A}, \mathbf{R}, \sigma)$ and outputs a state $\mathsf{st}$.
	\item \textbf{online}: for a given vector  $\mathbf{u}$, 
		$\mathcal{P}_2(\mathsf{st}, \mathbf{u})$ samples a vector from 
		$D_{\Lambda^{\bot}_{\textbf{u}}(\textbf{A}),\sigma}$ as long as
	\begin{equation}\label{k145}
	\sigma \geq \omega(\sqrt{\log k})\cdot \sqrt{7(s_1(\mathbf{R})^2+1)}, 
	\end{equation}
	 where $s_1(\mathbf{R}):=\max_{\Vert  \mathbf{u}\Vert=1} \Vert \mathbf{R} \mathbf{u}\Vert$ is the largest singular value of $\mathbf{R}$.
\end{itemize}
\end{theorem}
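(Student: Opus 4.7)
The plan is to follow the Micciancio--Peikert perturbation-based preimage sampling framework, specialized to the polynomial-gadget $\mathbf{G}$ used in \cite{LVV19}. The starting observation is that once we have $\mathbf{A}\cdot\bigl[\begin{smallmatrix}\mathbf{R}\\\mathbf{I}_{k\tau}\end{smallmatrix}\bigr]=\mathbf{G}$, any preimage for $\mathbf{G}$ can be lifted to a preimage for $\mathbf{A}$ via multiplication by $\bigl[\begin{smallmatrix}\mathbf{R}\\\mathbf{I}_{k\tau}\end{smallmatrix}\bigr]$; the only task is to shape the final distribution into a spherical discrete Gaussian of width $\sigma$. I would split the description of $\mathcal{P}$ accordingly into the three conceptual pieces: an offline perturbation, an online $\mathbf{G}$-sampling, and a lifting.

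In the offline phase $\mathcal{P}_1$ I would have it compute (and cache) a square root of the covariance
\[
\Sigma_p \;=\; \sigma^2\,\mathbf{I}_{m+k\tau} \;-\; s_G^{\,2}\,\begin{bmatrix}\mathbf{R}\\\mathbf{I}_{k\tau}\end{bmatrix}\begin{bmatrix}\mathbf{R}^{T} & \mathbf{I}_{k\tau}\end{bmatrix},
\]
where $s_G = \omega(\sqrt{\log k})$ is the width used for $\mathbf{G}$-sampling. The state $\mathsf{st}$ then consists of $(\mathbf{A},\mathbf{R},\sigma,s_G)$ together with one sample $\mathbf{p}\leftarrow D_{\mathbb{Z}^{m+k\tau},\sqrt{\Sigma_p}}$; this can be produced by the standard randomized-nearest-plane / Peikert-convolution sampler since $\Sigma_p$ depends only on $(\mathbf{R},\sigma,s_G)$, not on the target $\mathbf{u}$.

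In the online phase $\mathcal{P}_2(\mathsf{st},\mathbf{u})$ I would first compute the shifted syndrome $\mathbf{w}=\mathbf{u}-\mathbf{A}\mathbf{p}\pmod{q}$, then sample $\mathbf{z}\leftarrow D_{\Lambda_{\mathbf{w}}^{\bot}(\mathbf{G}),\,s_G}$ using the coordinate-wise base-$2$ gadget sampler: because $\mathbf{G}=\mathbf{I}_k\otimes[\,1\ 2\ \cdots\ 2^{\tau-1}]$ is block-diagonal with independent scalar rows, this factors into $k$ one-dimensional discrete Gaussian samplings over the gadget lattice, which are known to be polynomial-time. The algorithm finally outputs
\[
\mathbf{v} \;=\; \mathbf{p} \;+\; \begin{bmatrix}\mathbf{R}\\\mathbf{I}_{k\tau}\end{bmatrix}\mathbf{z}.
\]
Correctness as a preimage is immediate: $\mathbf{A}\mathbf{v}=\mathbf{A}\mathbf{p}+\mathbf{G}\mathbf{z}=\mathbf{A}\mathbf{p}+\mathbf{w}=\mathbf{u}\pmod q$.

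The hard part, and the only step that really needs work, is proving that the output distribution of $\mathbf{v}$ is statistically close to $D_{\Lambda_{\mathbf{u}}^{\bot}(\mathbf{A}),\sigma}$. The plan here is a convolution argument in the style of Peikert: marginalizing over $\mathbf{p}$ turns the sum of the two discrete Gaussians into a single discrete Gaussian over $\Lambda_{\mathbf{u}}^{\bot}(\mathbf{A})$ of covariance $\Sigma_p + s_G^{\,2}\,\bigl[\begin{smallmatrix}\mathbf{R}\\\mathbf{I}\end{smallmatrix}\bigr]\bigl[\mathbf{R}^{T}\ \mathbf{I}\bigr]=\sigma^2\mathbf{I}$, provided both layers exceed the smoothing parameter of the appropriate lattices. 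For this we need $\Sigma_p\succeq\eta_\epsilon(\mathbb{Z}^{m+k\tau})^2\cdot\mathbf{I}$, and bounding the spectral norm $s_1\bigl(\bigl[\begin{smallmatrix}\mathbf{R}\\\mathbf{I}\end{smallmatrix}\bigr]\bigr)^2\le s_1(\mathbf{R})^2+1$ together with the constant~$s_G^{\,2}$-factor (plus a slack to absorb the smoothing term $\eta_\epsilon(\mathbb{Z})=\omega(\sqrt{\log k})$) yields precisely the threshold $\sigma\ge\omega(\sqrt{\log k})\cdot\sqrt{7(s_1(\mathbf{R})^2+1)}$ stated in \eqref{k145}. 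The main technical obstacle is therefore neither the algorithm nor the correctness but the two-stage smoothing/convolution bookkeeping that transfers sphericality from the perturbation covariance through the gadget lift to the final lattice $\Lambda_{\mathbf{u}}^{\bot}(\mathbf{A})$.
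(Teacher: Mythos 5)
The paper itself gives no proof of this statement --- it is imported verbatim as \cite[Theorem 4]{LVV19}, which is in turn an instantiation of the Micciancio--Peikert perturbation sampler of \cite{MP12}. Your reconstruction (offline perturbation with covariance $\sigma^2\mathbf{I}-s_G^2\bigl[\begin{smallmatrix}\mathbf{R}\\ \mathbf{I}\end{smallmatrix}\bigr]\bigl[\mathbf{R}^{T}\ \mathbf{I}\bigr]$, online sampling over the gadget coset, lifting by $\bigl[\begin{smallmatrix}\mathbf{R}\\ \mathbf{I}\end{smallmatrix}\bigr]$, and Peikert's convolution theorem to restore sphericality, with the factor $\sqrt{7}$ coming from $s_G^2\approx 5\,\omega(\log k)$ plus the smoothing slack) is exactly the argument used in those cited works, so it is correct and matches the source's approach.
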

The value $s_1(\mathbf{R})$ is upper bounded as explained by the lemma given below.
\begin{lemma}[{\cite[ Lemma 6]{LVV19}}] \label{lem5}
	For any matrix $\mathbf{R}=(R_{ij}) \in \mathbb{R}^{m \times n}$,  \begin{equation}\label{k16}
	s_1(\mathbf{R}) \leq \sqrt{mn}\cdot \max_{i,j}\vert R_{ij}\vert.
	\end{equation}
	\end{lemma}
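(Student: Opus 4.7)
The plan is to bound $s_1(\mathbf{R})$ by the Frobenius norm of $\mathbf{R}$ and then bound the Frobenius norm in terms of the max absolute entry. Let $M := \max_{i,j}|R_{ij}|$.

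First I would recall that if $s_1(\mathbf{R}) \geq s_2(\mathbf{R}) \geq \cdots \geq s_{\min(m,n)}(\mathbf{R}) \geq 0$ are the singular values of $\mathbf{R}$, then the Frobenius norm satisfies
\[
\Vert \mathbf{R}\Vert_F^2 \;=\; \sum_{i,j} R_{ij}^2 \;=\; \sum_k s_k(\mathbf{R})^2 \;\geq\; s_1(\mathbf{R})^2.
\]
The first equality is the definition of the Frobenius norm, and the middle equality is the well-known identity obtained by taking the trace of $\mathbf{R}^\top \mathbf{R}$ and diagonalising it (or equivalently by writing $\mathbf{R}$ in its singular value decomposition). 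Hence $s_1(\mathbf{R}) \leq \Vert \mathbf{R}\Vert_F$.

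Next I would upper bound $\Vert \mathbf{R}\Vert_F$ entrywise: since there are exactly $mn$ entries and each satisfies $|R_{ij}| \leq M$,
\[
\Vert \mathbf{R}\Vert_F \;=\; \sqrt{\sum_{i,j} R_{ij}^2} \;\leq\; \sqrt{mn \cdot M^2} \;=\; \sqrt{mn}\cdot M.
\]
Chaining the two inequalities gives $s_1(\mathbf{R}) \leq \sqrt{mn}\cdot \max_{i,j}|R_{ij}|$, as required.

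There is no real obstacle here; the statement is a standard inequality relating the operator $2$-norm, the Frobenius norm, and the entrywise $\infty$-norm. If one preferred a self-contained argument avoiding the singular value decomposition, one could instead apply Cauchy--Schwarz directly: for any unit vector $\mathbf{u}\in \mathbb{R}^n$,
\[
\Vert \mathbf{R}\mathbf{u}\Vert^2 \;=\; \sum_{i=1}^{m}\Bigl(\sum_{j=1}^{n} R_{ij}u_j\Bigr)^{\!2} \;\leq\; \sum_{i=1}^{m}\Bigl(\sum_{j=1}^{n} R_{ij}^2\Bigr)\Vert \mathbf{u}\Vert^2 \;\leq\; mn\cdot M^2,
\]
and taking square roots and the supremum over unit $\mathbf{u}$ finishes the proof.
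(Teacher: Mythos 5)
Your proof is correct: the chain $s_1(\mathbf{R}) \leq \Vert \mathbf{R}\Vert_F \leq \sqrt{mn}\cdot\max_{i,j}\vert R_{ij}\vert$ (or equivalently the direct Cauchy--Schwarz computation) is the standard argument for this inequality. The paper itself gives no proof, citing the result as Lemma 6 of \cite{LVV19}, and your derivation matches the usual one found there, so there is nothing to reconcile.
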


\noindent\textbf{G-Trapdoor for a family of polynomials.}
 We recap the construction of lattice trapdoors for {DMPLWE} from \cite{LVV19}. The construction applies two PPT algorithms $\mathsf{TrapGen}$ and $\mathsf{SamplePre}$.Suppose that $q=\textsf{poly}(n)$, $d \leq n$, $dt/n=\Omega(\log n)$, $d\gamma=n+2d-2$, $\tau:=\lceil \log_2 q \rceil$,  $\beta:=\lceil \frac{\log_2(n)}{2} \rceil  \ll q/2$, and $\sigma$ satisfies Equation \eqref{k12} below. Then, \textsf{TrapGen} and\textsf{ SamplePre} work as follows:  
\begin{description}
\item 
$\underline{\mathsf{TrapGen}(1^n)}$: On input a security parameter $n$, do the following:
\begin{itemize}
\item Sample $\overline{\mathbf{a}}'=(a_1,\cdots, a_t) \xleftarrow[]{\$} (\mathbb{Z}_q^{<n}[x])^t$, and for all $j\in [\gamma\tau]$, sample $\overline{\mathbf{w}}^{(j)}=(w^{(j)}_1, \cdots, w^{(j)}_t)\leftarrow (\Gamma^d[x])^t$ where $\Gamma=\mathcal{U}(\{-\beta,\cdots,\beta\})$.
\item For all $j\in [\gamma\tau]$, define
$u_j=\langle \overline{\mathbf{a}}',\overline{\mathbf{w}}^{(j)} \rangle$ 
and $a_{t+j}=g_j-u_j,$
 where \begin{equation}\label{k39}
 g_j=2^\eta x^{d\zeta} \in \mathbb{Z}_q^{n+d-1}[x],
 \end{equation} for $j=\zeta \tau+\eta+1$ with $\eta \in \{0,\cdots, \tau-1\}$, $\zeta \in \{0,\cdots, \gamma-1 \}$. Set $\overline{\mathbf{g}}:=(g_1,\cdots, g_{\gamma \tau})$.
 
\item Output $\overline{\mathbf{a}}:=(a_1,\cdots, a_t, a_{t+1},\cdots, a_{t+\gamma\tau})$ with its corresponding trapdoor $\mathsf{td}:=(\overline{\mathbf{w}}^{(1)}, \cdots, \overline{\mathbf{w}}^{({\gamma \tau})})$.
\end{itemize}
The amount of space to store the trapdoor $\mathsf{td}$ is $O(d(\gamma\tau)t)=O(n\tau t)$ as $d\gamma=n+2d-2 \leq 3n.$

\item 
$\underline{\mathsf{SamplePre}(\overline{\mathbf{a}}=(a_1, \cdots, a_{t+\gamma\tau}),\mathsf{td}=(\overline{\mathbf{w}}^{(1)}, \cdots, \overline{\mathbf{w}}^{({\gamma \tau})}),u,\sigma)}$: On input a family $\overline{\mathbf{a}}$ of $t+\gamma\tau$ polynomials together with its trapdoor $\mathsf{td}_{\epsilon}$ generated by \textsf{TrapGen}, and a polynomial $u$ of degree less than $n+2d-2$, do the following: 
\begin{itemize}
\item First, construct (implicitly) matrices $\textbf{A}',\textbf{A},\textbf{T}, \textbf{G}$ for $\overline{\mathbf{a}}'$, $\overline{\mathbf{a}}$, $\mathsf{td}$, $\overline{\mathbf{g}}$, respectively:
$$\mathbf{A}'=[\mathsf{Tp}^{n,2d-1}(a_1)|\cdots|\mathsf{Tp}^{n,2d-1}(a_t)],$$
	\begin{equation*}
	\mathbf{A}=[\mathsf{Tp}^{n,2d-1}(a_1)|\cdots|\mathsf{Tp}^{n,2d-1}(a_t)| \mathsf{Tp}^{n+d-1,d}(a_{t+1})|\cdots|\mathsf{Tp}^{n+d-1,d}(a_{t+\gamma \tau})],
		\end{equation*}

	\begin{equation}\label{k34}
\mathbf{T}= \begin{bmatrix}
\mathsf{Tp}^{d,d}(w^{(1)}_1) & \cdots & \mathsf{Tp}^{d,d}(w^{(\gamma\tau)}_1 )\\
\vdots &  & \vdots\\
\mathsf{Tp}^{d,d}(w^{(1)}_t) & \cdots & \mathsf{Tp}^{d,d}(w^{(\gamma\tau)}_t )
\end{bmatrix} \in \mathbb{Z}_q^{(2d-1)t \times d\gamma \tau},
	\end{equation}

	\begin{equation}\label{k2}
\mathbf{G}=[\mathsf{Tp}^{n+d-1,d}(g_1)|\cdots|\mathsf{Tp}^{n+d-1,d}(g_{\gamma \tau})] \in \mathbb{Z}_q^{d\gamma \times d\gamma \tau},
\end{equation}

$$\mathbf{I}_{d\gamma\tau}=\begin{bmatrix}
\mathsf{Tp}^{1,d}(1) & \cdots & \\
\cdots &  & \cdots\\
 & \cdots & \mathsf{Tp}^{1,d}(1)
\end{bmatrix}  \in \mathbb{Z}_q^{d\gamma \tau \times d\gamma \tau}.$$
Then  $\mathbf{A}=[\mathbf{A}'|\mathbf{G}-\mathbf{A}'\mathbf{T}]$ and hence $\mathbf{A}\cdot \bigl[ \begin{smallmatrix}
\mathbf{T}\\ \mathbf{I}_{d\gamma\tau}
\end{smallmatrix} \bigr]=\mathbf{G}$. Recall that $d\gamma=n+2d-2.$

\item The polynomial $u$ is represented it as $\textbf{u}=\mathsf{Tp}^{n+2d-2,1}(u) \in \mathbb{Z}_q^{n+2d-2}$.
\item Sample vector $\textbf{r} \in \mathbb{Z}^{(2d-1)t+d\gamma\tau}$ from $D_{\Lambda^{\bot}_{\textbf{u}}(\textbf{A}),\sigma}$ using the trapdoor $\textbf{T}$ in means of \cite{MP12}, where 
\begin{equation}\label{k12}
\sigma \geq \omega(\sqrt{\log(d \gamma)})\cdot \sqrt{7(s_1(\mathbf{T})^2+1)},
\end{equation}
and
\begin{equation}\label{k19}
s_1(\mathbf{T}) \leq \sqrt{(2d-1)t \cdot (d\gamma \tau)}\cdot \beta.
\end{equation}

\item Split $\textbf{r}$ into $\textbf{r}=[\textbf{r}_1^{\top}|\cdots| \textbf{r}^{\top}_{t+\gamma \tau}]^{\top}$, and rewrite it (in column) as a Toeplitz matrix of polynomials $r_1, \cdots, r_{t+\gamma\tau}$, where $\textbf{r}_j=\mathsf{Tp}^{2d-1,1}(r_{j})$,  $\deg(r_j)<2d-1$, $\forall j\in [t]$,  $\textbf{r}_j=\mathsf{Tp}^{d,1}(r_{j})$, $\deg(r_{t+j}) <d$,$\forall j\in {t+1, \cdots, t+\gamma\tau}$.

  \item Output $\overline{\mathbf{r}}:=(r_1,\cdots,r_{t+\gamma\tau})$. Note that, $\langle \overline{\mathbf{a}}, \overline{\mathbf{r}} \rangle=\sum_{i=1}^{t+\gamma  \tau} a_i\cdot r_i=u$; see \cite[Section 5]{LVV19} for more details.
\end{itemize}

The runtime of \textsf{SamplePre} is $\tilde{O}(nt)$ and the output distribution of $(r_i)$ is exactly the conditional distribution 
$$(D_{\mathbb{Z}^{2d-1}, \sigma}[x])^t \times (D_{\mathbb{Z}^{d}, \sigma}[x])^{\gamma\tau}|\sum_{i=1}^{t+\gamma  \tau} a_i \cdot r_i=u. $$
\end{description}

Further on, we give our main results, which are a trapdoor delegation mechanism useful 
for extending a family of polynomials as well as a \text{HIBE} system 
built using the framework of \cite{AB09}. 
From now on, by ``trapdoor", we mean ``\textbf{G}-trapdoor", 
where $\textbf{G}$ is defined by Equation \eqref{k2}. 
Also, we denote  the output of \textsf{TrapGen} by $\overline{\mathbf{a}}_{\epsilon}$ and $\mathsf{td}_{\epsilon}$ and call them \textit{the root family} and \textit{the root trapdoor}, respectively. 
The  Toeplitz matrices $\textbf{A}_{\epsilon}$ and $\textbf{T}_{\epsilon}$
correspond to $\overline{\mathbf{a}}_{\epsilon}$ and $\mathsf{td}_{\epsilon}$,
respectively.

\section{Trapdoor Delegation for Polynomials } \label{trap}
\subsection{Description}
   In order to exploit the trapdoor technique in constructing a \text{MPLWE}-based \text{HIBE} scheme, we have to solve the problem of delegating a trapdoor (in the sense of Definition \ref{gtrap}) for $\overline{\mathbf{f}}=(a_1,\cdots, a_{t'}|h_1, \cdots, h_{t''} )$ provided the trapdoor for $\overline{\mathbf{a}}=(a_1,\cdots, a_{t'})$. 
    As mentioned in Section \ref{trapdoor}, we can represent $\overline{\mathbf{f}}$ as a concatenation of Toeplitz matrices of the form $\mathbf{F}=[\mathbf{A}|\mathbf{H}]$ in which $\mathbf{A}, \mathbf{H}$ are the Toeplitz representations for $\overline{\mathbf{a}}$ and $\overline{\mathbf{h}}:=(h_1, \cdots, h_{t''})$, respectively. 
    
Following Definition \ref{gtrap}, our task is to find a matrix $\mathbf{R}$,
 which satisfies the equation 
    $\mathbf{F}\cdot \bigl[ \begin{smallmatrix}
       \mathbf{R}\\ \mathbf{I}
     \end{smallmatrix} \bigr]=\mathbf{G}$,
where $\mathbf{G}$ as given by Equation \eqref{k2}. 
Recall that, in matrix setting in \cite[Section 5.5]{MP12}, this task can be easily done by finding $\mathbf{R}$ that satisfies the relation $ \mathbf{A}\mathbf{R}=\mathbf{G}-\mathbf{H}$, 
when we know a trapdoor for $\mathbf{A}$ and 
 $\textbf{H}$ has the same dimension as $\textbf{G}$. 
In our setting, this task is not straightforward. 
The main reason for this is that the matrices $\textbf{A}$, $\textbf{G}$, $\textbf{H}$ 
are Toeplitz ones. 
To be able to  apply the idea of trapdoor delegation of \cite{MP12} to our setting, 
we have to design $\mathbf{H}$ such that $\mathbf{U}:=\mathbf{G}-\mathbf{H}$ is still in the Toeplitz form of some polynomials. 
In other words, the form of $\mathbf{H}$ should be similar in form and in dimension to that of $\mathbf{G}$ in \eqref{k2}, namely,
      	\begin{equation}\label{k8}
      	\mathbf{H}=[\mathsf{Tp}^{n+d-1,d}(h_1)|\cdots|\mathsf{Tp}^{n+d-1,d}(h_{\gamma \tau})] \in \mathbb{Z}_q^{d\gamma \times d\gamma \tau}. 
      	\end{equation}
This requires that $t''=\gamma \tau$ and $\deg(h_i)<n+d-1$ for all $i\in [\gamma \tau]$. 
If this is the case, the last step is to try to follow \cite{LVV19} using $\mathsf{SamplePre}$ 
to have  $\mathbf{R}$ satisfy $\mathbf{A}\mathbf{R}=\mathbf{U}$ given $\mathbf{A}$ 
and a trapdoor for $\mathbf{A}$. 
Note that in our polynomial setting $\textbf{R}$ should  be a structured matrix, which can be easily converted into appropriate polynomials $r_i$. 

     By generalization, we come up with the following theorem in which $t'=t+k\gamma\tau$ and $t''=m\gamma\tau$ for $k \geq 1, m\geq 1$:
\begin{theorem}[Trapdoor Delegation] \label{multiple}
Let $n$  be a positive integer, $q=\textsf{poly}(n)$ be a prime, and $d, t,$ $ \gamma, \tau, k$, $m$ be positive integers such that $d \leq n$, $dt/n =\Omega(\log n)$, $ d\gamma=n+2d-2$, $k\geq 1$, $m \geq 1$. Let $\tau:=\lceil \log_2 q \rceil$ and  $\beta:=\lceil \frac{\log_2 n}{2} \rceil$.	Let $\mathbf{G}$ be matrix as in \eqref{k2} and $\overline{\mathbf{a}}=(a_1,\cdots,a_{t+k\gamma \tau})$ be a ($t+k\gamma \tau$)-family of polynomials and its associated trapdoor $\mathsf{td}_a$, where $a_i \in \mathbb{Z}^{<n}_q[x] $ for $i \in [t]$ and $a_i \in \mathbb{Z}^{<n+d-1}_q[x] $ for $t+1\leq i \leq t+k\gamma \tau$. Suppose that $\overline{\mathbf{h}}=(h_1,\cdots, h_{m\gamma \tau})$ is a $m\gamma \tau$-family of polynomials in $\mathbb{Z}^{<n+d-1}_q[x]$ and $\overline{\sigma}=(\sigma_{k+1}, \cdots, \sigma_{k+m}) $ to be determined. Then, there exists an efficient (PPT) algorithm,  $\mathsf{SampleTrap}(\overline{\mathbf{a}},\overline{\mathbf{h}},\mathsf{td}_a,\overline{\sigma})$ that outputs a trapdoor $\mathsf{td}_f$ for $\overline{\mathbf{f}}=(a_1,\cdots,a_{t+k\gamma \tau}| h_1,\cdots, h_{m\gamma \tau})$. 
         Moreover, the amount of space to store the trapdoor $\mathsf{td}_f$ is $O(((2d-1)t+(k+m-1)\gamma \tau)\cdot d\gamma \tau)=O(n^2 \log^2 n)=\widetilde{O}(n^2)$.
         \end{theorem}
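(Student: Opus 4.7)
The plan is to reduce the general case $m\geq 1$ to the case $m=1$ by iterating the delegation $m$ times, appending one block of $\gamma\tau$ polynomials at each step using Gaussian width $\sigma_{k+j}$ at iteration $j\in[m]$. At iteration $j$, the input is a family of size $t+(k+j-1)\gamma\tau$ together with its $\mathbf{G}$-trapdoor (produced by the previous step, or the given $\mathsf{td}_a$ if $j=1$), and the output is a $\mathbf{G}$-trapdoor for a family of size $t+(k+j)\gamma\tau$. Iteration is sound because Theorem~\ref{kk} only requires an abstract $\mathbf{G}$-trapdoor in the sense of Definition~\ref{gtrap}: the preimage-sampler is indifferent to whether its trapdoor was freshly generated by $\mathsf{TrapGen}$ or produced by an earlier delegation.

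\textbf{Single-block delegation.} Fix $m=1$. Build the concatenated Toeplitz matrix $\mathbf{A}$ of $\overline{\mathbf{a}}$ (so that $\mathbf{A}\cdot\bigl[\begin{smallmatrix}\mathbf{T}_a\\ \mathbf{I}\end{smallmatrix}\bigr]=\mathbf{G}$ where $\mathbf{T}_a$ is the matrix form of $\mathsf{td}_a$), and the block-Toeplitz matrix $\mathbf{H}$ of $\overline{\mathbf{h}}$ as in Equation~\eqref{k8}. Set $\mathbf{F}:=[\mathbf{A}\mid\mathbf{H}]$. The task is to produce a matrix $\mathbf{R}'$ with short entries such that
\[
\mathbf{F}\cdot\begin{bmatrix}\mathbf{R}'\\ \mathbf{I}_{d\gamma\tau}\end{bmatrix}=\mathbf{G},\qquad\text{equivalently}\qquad \mathbf{A}\mathbf{R}'=\mathbf{G}-\mathbf{H}.
\]
Because $\mathbf{G}$ (Equation~\eqref{k2}) and $\mathbf{H}$ (Equation~\eqref{k8}) share the same block-Toeplitz shape, the right-hand side decomposes as $[\mathsf{Tp}^{n+d-1,d}(u_1)\mid\cdots\mid\mathsf{Tp}^{n+d-1,d}(u_{\gamma\tau})]$ with $u_j:=g_j-h_j\in\mathbb{Z}_q^{<n+d-1}[x]$. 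For each $j\in[\gamma\tau]$ invoke $\mathsf{SamplePre}(\overline{\mathbf{a}},\mathsf{td}_a,u_j,\sigma_{k+1})$ to obtain a short family $\overline{\mathbf{r}}^{(j)}$ with $\langle\overline{\mathbf{a}},\overline{\mathbf{r}}^{(j)}\rangle=u_j$. Arranging these polynomials in Toeplitz form into a block $\mathbf{R}^{(j)}$ and applying Lemma~\ref{lem1} column by column yields $\mathbf{A}\cdot\mathbf{R}^{(j)}=\mathsf{Tp}^{n+d-1,d}(u_j)$. Horizontal concatenation $\mathbf{R}':=[\mathbf{R}^{(1)}\mid\cdots\mid\mathbf{R}^{(\gamma\tau)}]$ satisfies $\mathbf{A}\mathbf{R}'=\mathbf{G}-\mathbf{H}$, and we output $\mathsf{td}_f:=(\overline{\mathbf{r}}^{(j)})_{j\in[\gamma\tau]}$.

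\textbf{Parameter management (the main obstacle).} The principal technical difficulty is calibrating $\overline{\sigma}=(\sigma_{k+1},\ldots,\sigma_{k+m})$ so that every SamplePre call inside the iteration succeeds and so that the output trapdoor at step $j$ is short enough to remain usable at step $j+1$. At step $j$, Equation~\eqref{k12} demands $\sigma_{k+j}\geq\omega(\sqrt{\log(d\gamma)})\cdot\sqrt{7(s_1(\mathbf{T}^{(k+j-1)})^2+1)}$, where $\mathbf{T}^{(k+j-1)}$ is the matrix form of the current trapdoor. Lemma~\ref{lem4} bounds each coefficient of every polynomial in $\overline{\mathbf{r}}^{(j)}$ by $\omega(\sqrt{\log n})\cdot\sigma_{k+j}$ with overwhelming probability, and Lemma~\ref{lem5} in turn bounds $s_1(\mathbf{T}^{(k+j)})$ polynomially in $n$ times $\sigma_{k+j}$. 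Plugging back into Equation~\eqref{k12} gives a recurrence $\sigma_{k+j+1}\geq\widetilde{O}(n)\cdot\sigma_{k+j}$, which is satisfied by a geometric choice $\sigma_{k+j}=\widetilde{O}(n^{j})\cdot\beta$; this sequence is precomputable from $(n,q,d,k,m)$ alone.

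\textbf{Storage.} After the $m$ iterations, $\mathsf{td}_f$ comprises $\gamma\tau$ families of short polynomials, each family consisting of $t$ polynomials of degree ${<}\,2d-1$ together with $(k+m-1)\gamma\tau$ polynomials of degree ${<}\,d$. Using $d\gamma=n+2d-2=O(n)$ and $\tau=O(\log n)$, the total number of coefficients stored is $O(((2d-1)t+(k+m-1)\gamma\tau)\cdot d\gamma\tau)=\widetilde{O}(n^{2})$ as claimed.
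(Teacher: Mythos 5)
Your overall architecture --- iterate a single-block delegation $m$ times, and within each block reduce the problem to finding a short $\mathbf{R}'$ with $\mathbf{A}\mathbf{R}'=\mathbf{G}-\mathbf{H}$ where $\mathbf{H}$ copies the block-Toeplitz shape of $\mathbf{G}$ --- is exactly the paper's (\textsf{SampleTrap} as a loop over \textsf{TrapDel}), and your parameter recurrence and final storage formula match Section~\ref{multrap}. The gap is inside the single-block step. You invoke the preimage sampler only $\gamma\tau$ times, once per $u_j=g_j-h_j$, and then assert that ``arranging these polynomials in Toeplitz form'' and applying Lemma~\ref{lem1} yields $\mathbf{A}\cdot\mathbf{R}^{(j)}=\mathsf{Tp}^{n+d-1,d}(u_j)$. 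This does not go through. One call to $\mathsf{SamplePre}$ on target $u_j$ produces a single coefficient vector $\mathbf{r}$ with $\mathbf{A}\mathbf{r}=\mathsf{Tp}^{n+2d-2,1}(u_j)$, i.e.\ only the \emph{first} of the $d$ columns of $\mathsf{Tp}^{n+d-1,d}(u_j)$. To manufacture the full $d$-column block from that one family via Lemma~\ref{lem1} you would have to place blocks $\mathsf{Tp}^{d,d}(r^{(j)}_i)$ beneath the blocks $\mathsf{Tp}^{n,2d-1}(a_i)$ of $\mathbf{A}$, which forces $\deg r^{(j)}_i<d$ for all $i\in[t]$; but $\mathsf{SamplePre}$ outputs $\deg r^{(j)}_i<2d-1$ on those coordinates, and $\mathsf{Tp}^{2d-1,d}(r^{(j)}_i)$ has $3d-2$ rows against the $2d-1$ columns of $\mathsf{Tp}^{n,2d-1}(a_i)$, so the product is not even defined. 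Nor can you reach the remaining columns by multiplying the family by $x^\alpha$: the shifted polynomials $x^\alpha r^{(j)}_i$ overflow the $2d-1$ (resp.\ $d$) coefficient slots that the trapdoor format \eqref{k13} allots to each block.

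The paper closes exactly this hole with Lemma~\ref{lem2}: each block $\mathsf{Tp}^{n+d-1,d}(u_\beta)$ is split into its $d$ individual columns $\mathsf{Tp}^{n+2d-2,1}(x^\alpha u_\beta)$, and the preimage sampler is called $d\gamma\tau$ times, once per target $v_i=x^\alpha u_\beta$ with $i=\alpha+d(\beta-1)+1$. The resulting trapdoor is a collection of $d\gamma\tau$ (not $\gamma\tau$) families $\overline{\mathbf{r}}^{(i)}$, which is also what makes the storage count $O(((2d-1)t+(k+m-1)\gamma\tau)\cdot d\gamma\tau)$ come out; your own description of $\mathsf{td}_f$ as ``$\gamma\tau$ families'' is inconsistent with the formula you then write down. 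The fix is local and leaves your iteration and Gaussian-parameter analysis intact (all $d\gamma\tau$ calls at step $j$ use the same $\sigma_{k+j}$), but as written your single-block step does not produce a matrix satisfying $\mathbf{A}\mathbf{R}'=\mathbf{G}-\mathbf{H}$.
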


\subsection{Elementary Trapdoor Delegation }\label{basictrap}
In this section, we present in detail the basic trapdoor delegation
for the family $\overline{\mathbf{f}}=(a_1, \cdots, a_{t+\gamma \tau} | h_1, \cdots,h_{\gamma \tau})$ 
given the root trapdoor $\textsf{td}_{\epsilon}$ for the root family $\overline{\mathbf{a}}_{\epsilon}=(a_1, \cdots, a_{t+\gamma \tau})$.
They are generated by $\mathsf{TrapGen}$, i.e.,  $\mathsf{SampleTrap}$ for $k=1$ and $m=1$.
This process is called  \textsf{TrapDel} and is shown as Algorithm \ref{algo1}.
  
Note that $\textsf{TrapGen}$,    	
$\overline{\mathbf{a}}_{\epsilon}=(a_1,\cdots,a_t,a_{t+1},\cdots,a_{t+\gamma \tau}) \in (\mathbb{Z}_q^{<n}[x])^t \times (\mathbb{Z}_q^{<n+d-1}[x])^{\gamma \tau},$ 
and the corresponding concatenated Toeplitz matrix $\mathbf{A}_{\epsilon} \in \mathbb{Z}_q^{(n+2d-2)\times [(2d-1)t+d\gamma \tau]}$ is constructed as 
  	\begin{equation}\label{k10}
  \mathbf{A}_{\epsilon}=[\mathsf{Tp}^{n,2d-1}(a_1)|\cdots|\mathsf{Tp}^{n,2d-1}(a_t)| \mathsf{Tp}^{n+d-1,d}(a_{t+1})|\cdots|\mathsf{Tp}^{n+d-1,d}(a_{t+\gamma \tau})].
  \end{equation} 
The matrix $\textbf{G}$ has the following form:
 $$\mathbf{G}=[ \mathsf{Tp}^{n+d-1,d}(g_1) | \cdots | \mathsf{Tp}^{n+d-1,d}(g_{\gamma \tau})],$$ 
  where $g_j=2^\eta x^{d\zeta}$ for $j=\zeta \tau+\eta+1$ with $\eta \in \{0,\cdots, \tau-1\}$, $\zeta \in \{0,\cdots, \gamma-1 \}$.
As discussed above, we construct   
    $\mathbf{H}=[ \mathsf{Tp}^{n+d-1,d}(h_1) | \cdots | \mathsf{Tp}^{n+d-1,d}(h_{\gamma \tau})]$ for $h_1, \cdots, h_{\gamma \tau}$, whose $\deg(h_i)<n+d-1$ for all $i\in [\gamma \tau]$. 
Then the Toeplitz matrix for $\overline{\mathbf{f}}$ takes the form 
    	\begin{equation}\label{k30}
     \begin{split}
      \mathbf{F}&=[\mathbf{A}_{\epsilon}|\mathbf{H}]\\&=[\mathsf{Tp}^{n,2d-1}(a_1)|\cdots|\mathsf{Tp}^{n,2d-1}(a_t)| \mathsf{Tp}^{n+d-1,d}(a_{t+1})|\cdots|\mathsf{Tp}^{n+d-1,d}(h_{\gamma \tau})].
     \end{split}
      \end{equation} 
 and 
    \begin{equation}\label{k11}
    \mathbf{G}-\mathbf{H}=[ \mathsf{Tp}^{n+d-1,d}(g_1-h_1) | \cdots | \mathsf{Tp}^{n+d-1,d}(g_{\gamma \tau}-h_{\gamma \tau})].
        \end{equation}
 For $i=1,\cdots, \gamma \tau$, let $u_i=g_i-h_i$. 
From Lemma \ref{lem2}, we have
 \begin{equation*}\label{k9}
 \begin{split}
   \mathbf{G}-\mathbf{H}=&[ \mathsf{Tp}^{n+2d-2,1}(u_1) | \cdots | \mathsf{Tp}^{n+2d-2,1}(x^{d-1}\cdot (u_1)) |\\
   &\cdots |\mathsf{Tp}^{n+2d-2,1}(u_{\gamma \tau}) |\cdots |\mathsf{Tp}^{n+2d-2,1}(x^{d-1}\cdot (u_{\gamma \tau}))]\\
   &=[ \mathsf{Tp}^{n+2d-2,1}(v_1) | \cdots | \mathsf{Tp}^{n+2d-2,1}(x^{d-1}\cdot (v_{d\gamma\tau}))],
    \end{split}
  \end{equation*}
  where $v_i=x^\alpha u_\beta$ for $i=\alpha+d(\beta-1)+1$, with 
   $\alpha \in \{0,\cdots, d-1\}$, $\beta \in \{1,\cdots, \gamma \tau \}$. 
Let $\mathbf{v}^{(i)}:=\mathsf{Tp}^{n+2d-2,1}(v_i)$. Now, for $i=1,\cdots, \gamma \tau$ we have to find $\mathbf{R}=[\textbf{r}^{(1)}|\cdots|\textbf{r}^{(d\gamma \tau)}]$ such that   $\mathbf{A}_{\epsilon}[\textbf{r}^{(1)}|\cdots|\textbf{r}^{(d\gamma \tau)}]=[\textbf{v}^{(1)}|\cdots|\textbf{v}^{(d\gamma \tau)}],$ which is equivalent to $\mathbf{A}_{\epsilon}\textbf{r}^{(i)}=\textbf{v}^{(i)}$ for $1\leq i \leq d\gamma \tau$. This can be done using $\mathsf{SamplePre}(\overline{\textbf{a}}_{\epsilon},$ $\mathsf{td}_{\epsilon}, v_i,\sigma)$. 
Eventually, we get $\mathbf{r}^{(i)} \in \mathbb{Z}^{(2d-1)t+d\gamma \tau}$, which is sampled from $\mathcal{D}_{\Lambda^{\bot}_{\textbf{v}^{(i)}}(\textbf{A}),\sigma}$, where $\sigma \geq \omega(\sqrt{\log(d \gamma)})\cdot \sqrt{7((2d-1)t \cdot (d\gamma \tau)\cdot \beta^2+1)}$; see \eqref{k12}, \eqref{k19}.  
 
 Finally, we obtain  the trapdoor $\textsf{td}_f=(\overline{\mathbf{r}}^{(1)}, \cdots, \overline{\mathbf{r}}^{(d \gamma \tau)})$ for $\overline{\textbf{f}}$, where $\overline{\mathbf{r}}^{(i)}=(r^{(i)}_{1}, \cdots, r^{(i)}_{t+\gamma \tau}) $,
     with $\deg(r^{(i)}_j)<2d-1$ for $j\in[t]$, $\deg(r^{(i)}_{t+j})<d$ for $j \in [\gamma \tau]$ and for all $i \in [d \gamma \tau]$. and its corresponding matrix representation is
 \begin{equation}\label{k13}
 \textbf{R}=(R_{ij})=\begin{bmatrix}
 \mathsf{Tp}^{2d-1,1}(r^{(1)}_{1})& \cdots& \mathsf{Tp}^{2d-1,1}(r^{(d\gamma \tau)}_{1})\\
 \vdots&&\vdots\\
 \mathsf{Tp}^{2d-1,1}(r^{(1)}_{t}) & \cdots& \mathsf{Tp}^{2d-1,1}(r^{(d\gamma \tau)}_{t})\\
 \mathsf{Tp}^{d,1}(r^{(1)}_{t+1})& \cdots& \mathsf{Tp}^{d,1}(r^{(d\gamma \tau)}_{t+1})\\
 \vdots&&\vdots\\
 \mathsf{Tp}^{d,1}(r^{(1)}_{t+\gamma \tau})& \cdots& \mathsf{Tp}^{d,1}(r^{(d\gamma \tau)}_{t+\gamma \tau})\\ 
 \end{bmatrix} \in \mathbb{Z}^{((2d-1)t+d\gamma \tau) \times d\gamma \tau}.
 \end{equation}
 Certainly, we have $\mathbf{F}\cdot \bigl[ \begin{smallmatrix}
         \mathbf{R}\\ \mathbf{I}
       \end{smallmatrix} \bigr]=\mathbf{G}$.  Remark that, by Lemma \ref{lem4},
  \begin{equation}\label{k17}
  \vert R_{ij} \vert \leq \omega(\log n)\cdot  \sigma \mbox { with probability } 1-\mathsf{negl}(n).  
  \end{equation} 
  Hence, from Lemma \ref{lem5}
   \begin{equation}\label{k18}
 s_1(\mathbf{R}) \leq \sqrt{((2d-1)t+d\gamma \tau)\cdot (d\gamma \tau)}\cdot \omega(\log n) \cdot \sigma,  
  \end{equation}
   where $\sigma$ satisfies Equation \eqref{k12}.

    \begin{algorithm}[h]
    	\caption{$\mathsf{\mathsf{TrapDel}(\overline{\textbf{a}},\overline{\textbf{h}},\mathsf{td},\sigma)}$}
    	\begin{algorithmic}[1]
    		
    		\REQUIRE A $(t+k\gamma \tau)$-family of polynomials $\overline{\mathbf{a}}=(a_1,\cdots,a_t,a_{t+1},\cdots,a_{t+k\gamma \tau}) \in (\mathbb{Z}_q^{<n}[x])^t \times (\mathbb{Z}_q^{<n+d-1}[x])^{k\gamma \tau},$ and its trapdoor $\mathsf{td}_a$, and a $\gamma\tau$-family of polynomials $\overline{\textbf{h}}=(h_1, \cdots, h_{\gamma \tau}) \in  (\mathbb{Z}_q^{<n+d-1}[x])^{\gamma \tau}$, and (implicitly) $\overline{\mathbf{g}}=(g_1,\cdots, g_{\gamma \tau}) \in (\mathbb{Z}_q^{<n+d-1}[x])^{\gamma \tau}$ as in \eqref{k39}.
    		\ENSURE The trapdoor $\mathsf{td}_f$ for $\overline{\textbf{f}}=(a_1,\cdots,a_{t+k\gamma \tau},h_1, \cdots, h_{\gamma \tau})$.
    		\STATE Compute $\overline{\textbf{u}}=(u_1,\cdots, u_{\gamma \tau}) \leftarrow \overline{\textbf{g}}-\overline{\textbf{h}}=(g_1-h_1, \cdots, g_{\gamma \tau}-h_{\gamma \tau})$. 
    		\STATE Define $v_i=x^\alpha u_\beta$ for $i=\alpha+d(\beta-1)+1$, with 
    		   $\alpha \in \{0,\cdots, d-1\}$, $\beta \in \{1,\cdots, \gamma \tau \}$.
    		\STATE   For $i\in [d\gamma\tau]$, call $ \mathsf{GenSamplePre}(\overline{\textbf{a}},\mathsf{td}_a, v_i,\sigma)$ to get $\overline{\mathbf{r}}^{(i)}=(r^{(i)}_{1}, \cdots, r^{(i)}_{t+k\gamma \tau})$,
    		    where $\deg(r^{(i)}_j)<2d-1$ for $j\in[t]$, $\deg(r^{(i)}_{t+j})<d$ for $j \in [k\gamma \tau]$ and for all $i \in [d \gamma \tau]$.
    		    \STATE Return $\mathsf{td}_f=(\overline{\mathbf{r}}^{(1)}, \cdots, \overline{\mathbf{r}}^{(d\gamma\tau)})$.
    	\end{algorithmic}
    	\label{algo1}
    \end{algorithm}


 Note that, after having the trapdoor for $\overline{\mathbf{f}}$ and by assigning 
 $\overline{\mathbf{a}}_{\epsilon}\leftarrow \overline{\mathbf{f}}$, 
 $\textbf{A}_{\epsilon} \leftarrow \textbf{F}$, 
 we can perform the same procedure explained above.
So we get a trapdoor for $\overline{\mathbf{f}}'=(a_1,\cdots,a_{t+\gamma \tau}, h_1,\cdots, h_{\gamma \tau}|z_1,\cdots, z_{\gamma \tau} )$ for some $\overline{\mathbf{z}}=(z_1,\cdots, z_{\gamma \tau})$, where $z_i\in \mathbb{Z}^{<n+d-1}_q[x]$.  
Consequently, 
we come up with a PPT algorithm called $\mathsf{TrapDel}$
(Algorithm \ref{algo1}) in which we consider the expanded families of the 
form $\overline{\mathbf{a}}=(a_1,\cdots,a_t,a_{t+1},\cdots,a_{t+k\gamma \tau}) \in (\mathbb{Z}_q^{<n}[x])^t \times (\mathbb{Z}_q^{<n+d-1}[x])^{k\gamma \tau}$ for $k \geq 1$. 
Also note that \textsf{TrapDel} does not call \textsf{SamplePre}.
Instead, it calls a slightly modified variant presented below.

\subsubsection{   Generalized \textsf{SamplePre}. }     
Accordingly to the expansion of trapdoors, 
we slightly modify \textsf{SamplePre} in Section \ref{trapdoor}
and call it $\mathsf{GenSamplePre}$.
The algorithm works not only with \textsf{TrapGen} 
(i.e., $k=1$) but also with \textsf{TrapDel} (i.e., $k>1$). 
$\mathsf{GenSamplePre}$ is the same as $\mathsf{SamplePre}$ 
except for $k>1$, where matrices  
\textbf{R} given as the input trapdoors are of form \eqref{k13}, 
while for $k=1$, the matrix \textbf{R} is of form \eqref{k34}. 
If we execute    
$\mathsf{GenSamplePre}$ for  input $(\overline{\mathbf{a}}=(a_1, \cdots, a_{t+k\gamma\tau}),\mathsf{td}_a=(\overline{\mathbf{r}}^{(1)}, \cdots, \overline{\mathbf{r}}^{({d\gamma \tau})}),u,\sigma)$, where $\overline{\mathbf{r}}^{(i)}=(r^{(i)}_1, \cdots,$ $ r^{(i)}_{k\gamma\tau})$ (with $k>1$),
 then $\mathsf{td}_a$ should be interpreted as a ${((2d-1)t+(k-1)d\gamma \tau) \times d\gamma \tau}$-matrix, say $ \textbf{R}^{(k-1)}$, of the form \eqref{k13}.
The last row is indexed by $t+(k-1)\gamma \tau$.

\subsection{\textsf{SampleTrap}} \label{multrap}

{$\textsf{SampleTrap}$ mentioned in Theorem \ref{multiple} is described 
as follows:

\underline{{\textsf{SampleTrap}($\overline{\mathbf{a}}= (a_1, \cdots, a_{t+k\gamma\tau}), \overline{\mathbf{h}}= (h_1, \cdots, h_{m\gamma \tau}),\mathsf{td}_a,\overline{\sigma}=(\sigma_{k+1}, \cdots, \sigma_{k+m})$)}:}
\begin{itemize}
\item \textbf{ Input:} 
	 A $(t+k\gamma \tau)$-family of polynomials $\overline{\mathbf{a}}=(a_1,\cdots,a_t,a_{t+1},\cdots,a_{t+k\gamma \tau}) \in (\mathbb{Z}_q^{<n}[x])^t \times (\mathbb{Z}_q^{<n+d-1}[x])^{k\gamma \tau}$, its trapdoor $\mathsf{td}_a$ and a $m\gamma\tau$-family of polynomials $\overline{\textbf{h}}=(h_0, \cdots, h_{m\gamma \tau}) \in  (\mathbb{Z}_q^{<n+d-1}[x])^{m\gamma \tau}$, 
	 where $m \geq 1$, and (implicitly) $\overline{\mathbf{g}}=(g_1,\cdots, g_{\gamma \tau}) \in (\mathbb{Z}_q^{<n+d-1}[x])^{\gamma \tau}$ as in \eqref{k39}.

\item \textbf{Output:} The trapdoor $\mathsf{td}_f$ for $\overline{\textbf{f}}=(a_1,\cdots,a_{t+k\gamma\tau}|h_1, \cdots, h_{m\gamma \tau})$.
\item \textbf{Execution:}
\begin{enumerate}
	\item Split $\overline{\mathbf{h}}=(\overline{\mathbf{h}}^{(1)}, \cdots ,\overline{\mathbf{h}}^{(m)})$ where each $\overline{\mathbf{h}}^{(i)}$ is a $\gamma\tau $-family of polynomials. 

	\item $\mathsf{td}^{(1)} \leftarrow \mathsf{td}_a$, $\overline{\mathbf{a}}^{(1)}\leftarrow \overline{\mathbf{a}}$.
	\item For $i=1$ up to $m$ do:
	\begin{itemize}
		\item  $\mathsf{td}^{(i+1)} \leftarrow \mathsf{TrapDel}(\overline{\mathbf{a}}^{(i)},\overline{\mathbf{h}}^{(i)},\mathsf{td}^{(i)},\sigma_{i})$.
		\item $\overline{\mathbf{a}}^{(i+1)}\leftarrow (\overline{\mathbf{a}}^{(i)}, \overline{\mathbf{h}}^{(i)})$.
	\end{itemize}
	\item Return $\mathsf{td}_f=\mathsf{td}^{(m+1)}$.
\end{enumerate}
\end{itemize}

Let us make few observations 
for \textsf{SampleTrap}}.
\subsubsection{Trapdoor  $\mathsf{td}_f$.} From Section \ref{basictrap}, we can easily generalize to see that the output   $\textsf{td}_f$ is $(\overline{\mathbf{r}}^{(1)}, \cdots, \overline{\mathbf{r}}^{(d \gamma \tau)})$ in which for $i \in [d\gamma \tau]$, $\overline{\mathbf{r}}^{(i)}=(r^{(i)}_1, \cdots,r^{(i)}_{t+(k+m-1) \gamma \tau} )$ and $r^{(i)}_j \in  \mathbb{Z}_q^{<n+d-1}[x] $ for $j \in [t]$, and $r^{(i)}_{t+j} \in  \mathbb{Z}_q^{<d}[x] $ for $j \in [(k+m-1)\gamma\tau]$. We can imply that the matrix representation, named $\mathbf{R}^{(k+m-1)}$, for  the trapdoor $\mathsf{td}_f$ has the form  \eqref{k13}, with the last row's index ${t+(k+m-1) \gamma\tau}$.

\subsubsection{Setting Gaussian parameters $\overline{\sigma}=(\sigma_{1}, \cdots, \sigma_{m})$.} 
Note that the algorithm 
$\mathsf{SamplePre}(\overline{\textbf{a}},\mathsf{td}_a, u,\overline{\sigma})$ 
has to satisfy
Condition \eqref{k12} for each $\sigma_i$.
The same condition must hold for
$\mathsf{GenSamplePre}$.
From Equation \eqref{k13},  we can see that 
the trapdoor $\mathsf{td}^{(i+1)}$ in $\mathsf{SampleTrap}$
can be interpreted as a matrix $\mathbf{R}^{(i+1)}$ of dimension $((2d-1)t+(k+i-1)d\gamma \tau)\times (d\gamma \tau)$.
Thus, $\sigma_{i}$ in Equations \eqref{k17} and \eqref{k18}
should satisfy  $\sigma_{i} \geq \omega(\sqrt{\log (d\gamma)})\cdot \sqrt{7(s_1(\mathbf{R}^{(i-1)})^2+1)}, $     
and \begin{equation*}
s_1(\mathbf{R}^{(i-1)}) \leq \sqrt{((2d-1)t+(k+i-1)d\gamma \tau)\cdot (d\gamma \tau)}\cdot \omega(\log n)\cdot \sigma_{i-1}, \text{ where } i\in [m].
\end{equation*}
    
\section{\text{DMPLWE}-based \text{HIBE} in Standard Model} \label{sec4}
In this section, we describe a \text{HIBE} system based on the \text{DMPLWE} problem. 
Our  \text{HIBE} scheme is \text{IND-sID-CPA} secure in the standard model 
and is inspired by the construction of \text{IBE} from \cite{AB09}. 
Note that the authors of \cite{LVV19} use a similar approach.
However, the private key $\textsf{SK}_\textsf{id}$ (with respect 
to an identity $\textsf{id}$) in the standard model \text{IBE} of \cite{LVV19} 
is actually not a trapdoor.
Therefore, it seems difficult to construct \text{HIBE} using this approach. 
In our \text{HIBE} construction, the private key for an identity   
$\textsf{id}=(id_1, \cdots, id_\ell)$ of depth $\ell$ is a 
trapdoor for a family of polynomials, which corresponds to the public key. 
So we can derive the private key for the appended identity  $\textsf{id}|id_{k}=(id_1, \cdots,id_\ell,\cdots id_k) $ using the trapdoor delegation presented in Section \ref{trap},
where $k>\ell$.

\subsection{Construction}  
Our construction, named \textsf{HIBE}, consists of a tuple of algorithms \{\textsf{Setup}, \textsf{Extract}, \textsf{Derive}, \textsf{Encrypt}, \textsf{Decrypt}\}. They are described below.
\begin{description}
\item \underline{\textbf{\textsf{Setup}($1^\lambda,1^n$)}:} On input the security parameter $n$, the maximum depth $\lambda$, perform the following:
\begin{itemize}
\item Set common parameters as follows:  \begin{itemize}
\item $q=q(n)$ be a prime; $ d, k $ be positive integers such that $2d+k \leq n$ and $\frac{n+2d-2}{d}$ is also a positive integer, say $\gamma$, i.e., $d \gamma =n+2d-2$; $\beta:=\lceil \frac{\log_2 n}{2} \rceil$ $, \tau:=\lceil \log_2 q \rceil$, $t$ is a positive integer and let $t'=t+\gamma \tau$, and plaintext space $\mathcal{M}:=\{0,1\}^{<k+2}[x]$..   Note that we will set $t'=m\gamma \tau$ (with $m \geq 2$), that is $t$ is a multiple of $\gamma \tau$ so as to we can apply the trapdoor delegation.
\item For Gaussian parameters used in \textsf{Encrypt}: choose $\overline{\alpha}=(\alpha_1,\cdots , \alpha_\lambda) \in \mathbb{R}^{\lambda}_{>0}$; for Gaussian parameters used in \textsf{Extract} and  \textsf{Derive}: choose $\overline{\Sigma}=(\overline{\sigma}^{(1)},\cdots , \overline{\sigma}^{(\lambda)})$, where $\overline{\sigma}^{(\ell)}=(\sigma^{(\ell)}_1, \cdots,$ $ \sigma^{(\ell)}_m) \in \mathbb{R}^{m}_{>0}$. For $\ell \in [\lambda]$, let $\overline{\Sigma}^{(\ell)}=(\overline{\sigma}^{(1)},\cdots , \overline{\sigma}^{(\ell)})$; for Gaussian parameters used in \textsf{Decrypt}: choose $\overline{\Psi}=(\Psi_1, \cdots, \Psi_{\lambda})\in \mathbb{R}^{\lambda}_{>0}$.
\end{itemize} 
They all are set as in Section \ref{params}.
\item For $\ell \in [\lambda]$, let  $\chi_\ell:=\lfloor D_{\alpha_\ell \cdot q}\rceil$ be  the rounded Gaussian distribution.
\item Use $\mathsf{TrapGen}(1^{n})$  to get a root family  $\overline{\mathbf{a}}_{\epsilon}=(a_1,\cdots,a_{t'})$ and its associated root trapdoor $\mathsf{td}_\epsilon$. 
\item Select uniformly a random polynomial $u_0 \in \mathbb{Z}_q^{<n+2d-2}[x]$. 

\item For each $i\in [\lambda]$, and each $\mathsf{bit}\in\{0,1\}$, sample randomly $\overline{\mathbf{h}}^{(i,\mathsf{bit})}=(h_1^{(i,\mathsf{bit})}, \cdots,h_{t'}^{(i,\mathsf{bit})} )$, where each $h_{j}^{(i,\mathsf{bit})}\in \mathbb{Z}_q^{<n}[x]$ for $j\in [t]$, and each $h_{j}^{(i,\mathsf{bit})}\in \mathbb{Z}_q^{<n+d-1}[x]$  for $j\in \{t+1, \cdots, t+\gamma\tau\}$. Let $\textsf{HList}=\{(i,\textsf{bit}, \overline{\mathbf{h}}^{(i,\mathsf{bit})}): i \in [\lambda], \textsf{bit}\in \{0,1\} \}$ be the ordered set of all $\overline{\mathbf{h}}^{(i,\mathsf{bit})}$. 

\item Set the master secret key $\mathsf{MSK}:=\mathsf{td}_\epsilon$. 
\end{itemize}

We denote $\textsf{id}=(id_1, \cdots, id_\ell) \in \{0,1\}^\ell$ as an identity of depth $\ell \leq \lambda$. All following algorithms will always work on $\overline{\mathbf{a}}_{\epsilon}=(a_1,\cdots,a_{t'})$ and $\textsf{HList}$.

\item \underline{\textbf{\textsf{Derive}}$( \mathsf{id}|id_{\ell+1},\mathsf{SK}_{\textsf{id}}):$} On input  $\mathsf{id}=(id_1, \cdots, id_\ell)$, $\mathsf{id}|id_{\ell+1}=(id_1,$ $ \cdots, id_\ell, id_{\ell+1})$,  private key  $\mathsf{SK}_\mathsf{id}:=\textsf{td}_{\textsf{id}}$-- the trapdoor for $\overline{\mathbf{f}}_{\mathsf{id}}=(\overline{\mathbf{a}}_{\epsilon}, \overline{\mathbf{h}}^{(1,id_1)}$, $  \cdots,\overline{\mathbf{h}}^{(\ell,id_{\ell})})$, execute: 
\begin{enumerate}
\item Output $\mathsf{SK}_{\mathsf{id}|id_{\ell+1}} \leftarrow \mathsf{SampleTrap}(\overline{\mathbf{f}}_{\mathsf{id}},\overline{\mathbf{h}}^{(\ell+1,id_{\ell+1})},\mathsf{SK}_\mathsf{id},\overline{\Sigma}^{(\ell +1)})$.
\end{enumerate}

\item \underline{ \textbf{$\textsf{Extract}(\mathsf{id},\mathsf{MSK})$}:} On input   $\mathsf{id}=(id_1, \cdots, id_\ell)$,  $\mathsf{MSK}=\mathsf{td}_\epsilon$, execute:

\begin{enumerate}

\item Build $\overline{\mathbf{h}}_\mathsf{id}=( \overline{\mathbf{h}}^{(1,id_1)},  \cdots,\overline{\mathbf{h}}^{(\ell,id_{\ell})})$.
\item Output $\mathsf{SK}_{\textsf{id}} \leftarrow \mathsf{SampleTrap}(\overline{\mathbf{a}}_{\epsilon},\overline{\mathbf{h}}_{\mathsf{id}},\mathsf{MSK},\overline{\Sigma}^{(\ell)})$.
\end{enumerate}

\item \underline{\textbf{\textsf{Encrypt}}($\mathsf{id},\mu, u_0, \alpha_\ell$):} On input $\mathsf{id}=(id_1, \cdots, id_\ell)$, $ \mu \in \mathcal{M}$, $u_0$, $\alpha_\ell$, execute:
\begin{enumerate}
\item Build $(f_1, \cdots, f_{t'(\ell+1)}) \leftarrow \overline{\mathbf{f}}_{\mathsf{id}}=(\overline{\mathbf{a}}_{\epsilon}, \overline{\mathbf{h}}^{(1,id_1)} \cdots,\overline{\mathbf{h}}^{(\ell,id_{\ell})}).$
\item Sample $s \xleftarrow[]{\$}\mathbb{Z}_q^{<n+2d+k-1}[x]$. 
\item Sample  $e_0 \leftarrow \chi_{\ell}^{k+1}[x]$, compute: $\mathsf{CT}_{0}=u_0\odot_{k+2}s+2e_0+\mu$.
\item For $i=0$ to $ \ell$ do: 
\begin{itemize}
\item For $j \in [t]$, sample $e_{i\cdot t'+j} \leftarrow \chi_{\ell}^{2d+k}[x]$, and compute:
 $$\textsf{ct}_i=f_{i\cdot t'+j}\odot_{2d+k}s+2e_{i\cdot t'+j}.$$
\item For $t+1 \leq j \leq t+\gamma \tau$, sample $e_{i\cdot t'+j} \leftarrow \chi_{\ell}^{d+k+1}[x]$, and compute: $$\textsf{ct}_i=f_{i\cdot t'+j}\odot_{d+k+1}s+2e_{i\cdot t'+j}.$$
\end{itemize}

\item Set $\mathsf{CT}_{1}=(\textsf{ct}_1, \cdots, \textsf{ct}_{t'(\ell+1)})$, and output ciphertext $\overline{\mathsf{CT}}=(\mathsf{CT}_0, \mathsf{CT}_1)$.
 
\end{enumerate}

\item \underline{\textbf{\textsf{Decrypt}}($\mathsf{id}, \mathsf{SK}_\mathsf{id},\overline{\mathsf{CT}},u_0, \Psi_\ell $):}
On input  $\mathsf{id}=(id_1, \cdots, id_\ell)$,   $\mathsf{SK}_\mathsf{id}:=\textsf{td}_\textsf{id}$--the trapdoor  for $ \overline{\mathbf{f}}_{\mathsf{id}}=(\overline{\mathbf{a}}_{\epsilon},   \overline{\mathbf{h}}^{(1,id_1)}$, $  \cdots,\overline{\mathbf{h}}^{(\ell,id_{\ell})})$, ciphertext $\overline{\mathsf{CT}}=(\mathsf{CT}_0, \mathsf{CT}_1)$, $u_0$, and $\Psi_\ell $,  do:

\begin{enumerate}
\item Parse  $(f_1, \cdots, f_{t'(\ell+1)}) $ $\leftarrow \overline{\mathbf{f}}_{\mathsf{id}}=(\overline{\mathbf{a}}_{\epsilon},   \overline{\mathbf{h}}^{(1,id_1)}$, $  \cdots,\overline{\mathbf{h}}^{(\ell,id_{\ell})})$.
\item Sample $\overline{\textbf{r}}=(r_1,\cdots, r_{t'(\ell+1)}) \leftarrow \mathsf{GenSamplePre}(\overline{\textbf{f}}_{\mathsf{id}}, \mathsf{SK}_\mathsf{id}, u_0,\Psi_\ell)$, \\i.e., $\langle \overline{\textbf{f}}_{\mathsf{id}}, \overline{\textbf{r}} \rangle =\sum_{1}^{t'(\ell+1)}r_i\cdot f_i=u_0$.
\item Parse $(\mathsf{CT}_0, \mathsf{CT}_1=(\textsf{ct}_1, \cdots, \textsf{ct}_{t'(\ell+1)})) \leftarrow \overline{\mathsf{CT}}$.
\item Output $\mu=(\mathsf{CT}_{0}-\sum_{i=1}^{t'(\ell+1)} \textsf{ct}_{i}\odot_{k+2} r_i \mod q) \mod 2$.
\end{enumerate}

\end{description}

\subsection{Correctness and Parameters} \label{params}
\begin{lemma}[Correctness] For $\ell \in [\lambda]$, if 
\begin{equation}\label{k21}
	\alpha_\ell <\frac{1}{4} \left[t'(\ell+1)\cdot(k+1)\cdot \omega(\log n)\cdot \Psi_\ell   +\omega(\sqrt{\log n})\right]^{-1},
\end{equation}
	 then the scheme is correct with probability $1-\mathsf{negl}(n)$.
\end{lemma}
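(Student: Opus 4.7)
The plan is to trace through the decryption computation, show that the ``masking'' term $u_0\odot_{k+2} s$ cancels thanks to Lemma~\ref{lem3}, and bound the residual error so that the final $\bmod\ 2$ step returns $\mu$. Substituting the encryption formulas into $\mathsf{CT}_0 - \sum_{i=1}^{t'(\ell+1)} \mathsf{ct}_i \odot_{k+2} r_i$ and using linearity of $\odot_{k+2}$ in each argument, the expression becomes
\[
u_0 \odot_{k+2} s + 2e_0 + \mu \;-\; \sum_i r_i \odot_{k+2}\bigl(f_i \odot_{\delta_i} s\bigr) \;-\; 2\sum_i r_i \odot_{k+2} e_i,
\]
where $\delta_i = 2d+k$ when $f_i \in \mathbb{Z}_q^{<n}[x]$ and $\delta_i = d+k+1$ when $f_i \in \mathbb{Z}_q^{<n+d-1}[x]$.

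Next, apply Lemma~\ref{lem3} to each inner term. In the first case one sets the lemma's parameters so that $d_{\mathrm{lem}} = k+2$ and $k_{\mathrm{lem}} = 2d-2$, requiring $\deg r_i < 2d-1$; in the second case $k_{\mathrm{lem}} = d-1$, requiring $\deg r_i < d$. Both degree bounds are precisely what $\mathsf{GenSamplePre}$ guarantees, and $\deg s < n+2d+k-1$ matches the required $s$-degree in both cases. Thus each inner term rewrites as $(r_i\cdot f_i)\odot_{k+2} s$; summing and using the postcondition $\sum_i r_i\cdot f_i = u_0$ of $\mathsf{GenSamplePre}$ collapses the whole sum to $u_0\odot_{k+2} s$. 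Subtracting cancels the masking term, leaving $\mu + 2E \bmod q$ with $E := e_0 - \sum_i r_i \odot_{k+2} e_i$.

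It remains to show $\|\mu + 2E\|_\infty < q/2$ so that the $\bmod\ q$ step is a no-op and the final $\bmod\ 2$ returns $\mu$. I apply Lemma~\ref{lem4} twice: for the encryption errors, $\|e_0\|_\infty,\|e_i\|_\infty \leq \alpha_\ell q\cdot \omega(\sqrt{\log n})$ with overwhelming probability; for the preimages, $\|r_i\|_\infty \leq \Psi_\ell\cdot\omega(\sqrt{\log n})$ with overwhelming probability, since by Theorem~\ref{kk} the output of $\mathsf{GenSamplePre}$ is distributed as $D_{\Lambda_{\mathbf{u}_0}^\bot(\mathbf{F}_{\mathsf{id}}),\Psi_\ell}$, whose coordinate marginals are sub-Gaussian with parameter $\Psi_\ell$. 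Combined with the coefficient-wise middle-product bound $\|r_i\odot_{k+2} e_i\|_\infty \leq (k+1)\|r_i\|_\infty\|e_i\|_\infty$ and the triangle inequality over the $t'(\ell+1)$ summands, one gets
\[
\|E\|_\infty \;\leq\; \alpha_\ell q\cdot \omega(\sqrt{\log n}) \;+\; t'(\ell+1)(k+1)\Psi_\ell\,\alpha_\ell q\cdot \omega(\log n),
\]
which is strictly less than $q/4$ exactly under hypothesis~\eqref{k21}; hence $\|\mu+2E\|_\infty\leq 1 + 2\|E\|_\infty < q/2$ and decryption returns $\mu$.

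The main obstacle is the degree bookkeeping: one must verify that Lemma~\ref{lem3} applies uniformly to the two polynomial ``types'' ($\deg{<}n$ vs.\ $\deg{<}n{+}d{-}1$) and across all $\ell+1$ concatenated sub-families of $\overline{\mathbf{f}}_{\mathsf{id}}$. A secondary subtlety is that $\mathsf{GenSamplePre}$ samples $\overline{\mathbf{r}}$ jointly subject to the linear constraint $\sum_i r_i f_i = u_0$; the tail bound on $\|r_i\|_\infty$ is nevertheless available because Theorem~\ref{kk} gives the joint distribution as a discrete Gaussian of parameter $\Psi_\ell$ on the relevant coset of $\Lambda^{\bot}(\mathbf{F}_{\mathsf{id}})$, whose marginals are Gaussian enough for Lemma~\ref{lem4}.
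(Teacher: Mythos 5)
Your proposal follows essentially the same route as the paper's own proof: cancel the mask $u_0\odot_{k+2}s$ via Lemma~\ref{lem3} and the preimage identity $\sum_i r_i f_i = u_0$, reduce to bounding $e_0-\sum_i r_i\odot_{k+2}e_i$, and apply the Gaussian tail bound of Lemma~\ref{lem4} to both the $r_i$ and the $e_i$ before invoking condition~\eqref{k21}; you merely spell out the degree bookkeeping that the paper leaves implicit. The only divergence is cosmetic: you carry the constant $(k+1)$ from the lemma statement, whereas the paper's own derivation produces $(k+2)$, an internal inconsistency of the paper rather than a flaw in your argument.
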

\begin{proof}
For $\textsf{id}=(id_1, \cdots, id_\ell)$, we need to show that 
$$\textsf{Decrypt}(\mathsf{id}, \mathsf{SK}_\mathsf{id},\textsf{Encrypt}(\mathsf{id},\mu, u_0, \alpha_\ell),u_0, \Psi_\ell)=\mu,$$
 with probability $1-\textsf{negl}(n)$ over the randomness of \textsf{Setup}, \textsf{Derive}, \textsf{Extract}, \textsf{Encrypt}.
Suppose that  $\overline{\mathsf{CT}}:=(\mathsf{CT}_0, \mathsf{CT}_1=(\textsf{ct}_1, \cdots, \textsf{ct}_{t'(\ell+1)})) \leftarrow \textsf{Encrypt}(\mathsf{id},\mu, u_0, \alpha_\ell).$ By Lemma \ref{lem3}, we have

 $\mathsf{CT}_{0}-\sum_{i=1}^{t'(\ell+1)} \textsf{ct}_{i}\odot_{k+2} r_i =\mu+2(e_0-\sum_{1}^{t'(\ell+1)}r_i\odot_{k+2} e_i)$. \\
  Hence, 
if $\Vert \mu+2(e_0-\sum_{1}^{t'(\ell+1)}r_i\odot_{k+2} e_i)\Vert_{\infty}< q/2$ then $\mu $ is recovered.

Therefore, we need to bound the coefficients of $e_0-\sum_{1}^{t'(\ell+1)}r_i\odot_{k+2} e_i$. First, note that, 
\begin{itemize}
\item for $i \in [t]$: $\deg(r_i)< d_r:=2d-1$, $\deg(e_i)<d_e:=k+1$. 
\item for $i \in \{ t+1, \cdots, t'(\ell+1)\}$: $\deg(r_i)< d_r:=d$, $\deg(e_i)<d_e:=d+k+1$.
\end{itemize}
In gerneral, $d_e+d_r-1=2(d-1)+(k+2)$. Let $\textbf{r}_i=(\textbf{r}_{i,0}, \cdots, \textbf{r}_{i,d_r-1}), \textbf{e}_i=(\textbf{e}_{i,0}, \cdots, \textbf{e}_{i,d_e-1})$ be the vectors of coefficients of $r_i$ and $e_i$, respectively. By definition of the middle product,
$r_i\odot_{k+2} e_i= \sum_{j+w=d-1}^{d+k}\textbf{r}_{i,j}\cdot \textbf{e}_{i,w}\cdot x^{j+w}.$
By Lemma \ref{lem4},
$\Pr[\Vert \textbf{r}_i\Vert_{\infty} > \omega(\sqrt{\log n})\cdot \Psi_\ell ]=\textsf{negl}(n), $
$\Pr[\Vert \textbf{e}_i\Vert_{\infty}> \omega(\sqrt{\log n})\cdot \alpha_{\ell}\cdot q]=\textsf{negl}(n). $

Hence $\Vert r_i\odot_{k+2} e_i\Vert_{\infty} <(k+2)\cdot \omega(\log n)\cdot \Psi_\ell \cdot \alpha_{\ell}\cdot q.$
 As a result, 
$$\left\Vert e_0-\sum_{1}^{t'(\ell+1)}r_i\odot_{k+2} e_i\right \Vert_{\infty}\leq [t'(\ell+1)\cdot(k+2)\cdot \omega(\log n)\cdot \Psi_\ell  +\omega(\sqrt{\log n})]\cdot \alpha_{\ell}\cdot q.$$
In order for the decryption to be correct, we need Condition \eqref{k21}.
\qed
\end{proof}

\subsubsection{Setting Parameters.}  We set the parameters as described below:
 \begin{itemize} 
 	\item Security prameter $n$, $q=\textsf{poly}(n)$ prime, $\beta:=\lceil \frac{\log_2(n)}{2} \rceil \ll q/2$, $ \tau:=\lceil \log_2 (q) \rceil$, $\tau=\Theta(\log q)=\Theta(\log n)$,  $t'=t+\gamma \tau=m\gamma \tau$ (for some $m \geq 2$), $d \leq n$, $dt/n=\Omega(\log n)$, and $d\gamma=n+2d-2 \leq 3n$.
 	\item  
 	We set Gaussian parameters  used in \textsf{Extract} and \textsf{Derive} as follows: Recall that, for $\ell \in [\lambda]$, $\overline{\Sigma}^{(\ell)}=(\overline{\sigma}^{(1)},\cdots , \overline{\sigma}^{(\ell)})$, where each $ \overline{\sigma}^{(i)}=(\sigma^{(i)}_1, \cdots, \sigma^{(i)}_m) \in \mathbb{R}^{m}_{>0}$.  It suffices to consider the maximal case happening in \textsf{Extract} in which $\overline{\Sigma}=(\overline{\sigma}^{(1)},\cdots , \overline{\sigma}^{(\lambda)})$. Now, we renumber $\overline{\Sigma}$ as $(\sigma_1, \cdots, \sigma_{m\lambda})$ without changing their order. 
For the maximal identity $\textsf{id}=(id_1, \cdots, id_\lambda)$, we build $\overline{\mathbf{h}}_\mathsf{id}=( \overline{\mathbf{h}}^{(1,id_{1})}, $ $ \cdots,\overline{\mathbf{h}}^{(\lambda,id_\lambda)})$ and then compute  $\mathsf{SK}_{\textsf{id}}$ by calling $ \mathsf{SampleTrap}$ for input $(\overline{\mathbf{a}}_{\epsilon},\overline{\mathbf{h}}_{\mathsf{id}},$ $\mathsf{MSK},\overline{\Sigma})$. 
We now split $\overline{\mathbf{h}}_\mathsf{id}$ into $(\overline{\mathbf{h}}^{(1)}, \cdots, \overline{\mathbf{h}}^{(m\lambda)})$ and let $\overline{\mathbf{a}}^{(i)}=(\overline{\mathbf{a}}_{\epsilon}|\overline{\mathbf{h}}^{(1)}|\cdots| \overline{\mathbf{h}}^{(i)})$ with $\overline{\mathbf{a}}^{(0)}=\overline{\mathbf{a}}_{\epsilon}$. 
Then, $ \mathsf{SampleTrap}$ calls \textsf{TrapDel}$(\overline{\mathbf{a}}^{(i-1)},\overline{\mathbf{h}}^{(i)},$ $\mathsf{td}^{(i-1)},\sigma_{i})$
 up to $m \lambda$ times 
 for $i\in [m\lambda]$, in which  $\mathsf{td}^{(0)}=\mathsf{td}_{\epsilon}$ and 
 $\mathsf{td}^{(i-1)}$ is the output of the previous execution of \textsf{TrapDel}$(\overline{\mathbf{a}}^{(i-2)},\overline{\mathbf{h}}^{(i-1)},\mathsf{td}^{(i-2)},\sigma_{i-1})$, for $2 \leq i \leq m\lambda$. 
 Now, all $\sigma_i$'s are set in the same way as in Section \ref{multrap}, that is, for $2 \leq i \leq m \lambda$ , 
 	$\sigma_i \geq \omega(\sqrt{\log (d\gamma)})\cdot \sqrt{7(s_1(\mathbf{R}^{(i-1)})^2+1)}, $ and
 	 \begin{equation*}\label{k23}
 	s_1(\mathbf{R}^{(i-1)}) \leq \sqrt{((2d-1)t+(i-1)d\gamma \tau)\cdot (d\gamma \tau)}\cdot \omega(\log n)\cdot \sigma_{i-1},
 	\end{equation*}

 	in which $\mathbf{R}^{(i-1)}$ is the matrix representation, as in \eqref{k13} with the last row's index  $t+(i-1)\gamma\tau$, of the private key (the trapdoor) for $\overline{\mathbf{a}}^{(i-1)}=(\overline{\mathbf{a}}_{\epsilon}|\overline{\mathbf{h}}^{(1)}|\cdots| \overline{\mathbf{h}}^{(i-1)})$, with  $\sigma_1$ and $\mathbf{R}^{(1)}$ play the role of $\sigma$ and $\mathbf{T}$ in \eqref{k12}, \eqref{k19}.
 
 		\item  We set Gaussian parameters used in \textsf{Decrypt} $\overline{\Psi}=(\Psi_1,\cdots , \Psi_\lambda)$ as follows: For $\ell \in [\lambda]$, since  $\Psi_\ell$ is used in  $\mathsf{GenSamplePre}(\overline{\textbf{f}}_{\mathsf{id}}, \mathsf{SK}_\mathsf{id}, u_0,\Psi_\ell)$ with $\mathsf{SK}_\mathsf{id}=\textsf{td}_\textsf{id}$ the trapdoor for $\overline{\mathbf{f}}_{\mathsf{id}}=(\overline{\mathbf{a}}_{\epsilon}, \overline{\mathbf{h}}^{(1,id_1)}$, $  \cdots,\overline{\mathbf{h}}^{(\ell,id_{\ell})})$ which equals to $\overline{\mathbf{a}}^{(\ell m)}$ above. Therefore, for $\ell \in [\lambda-1]$ we can set
 	$\Psi_\ell =\sigma_{\ell m+1},$ and 
 	 	
 	 	\begin{equation*}\label{k36}
 	 	 	\Psi_\lambda \geq \omega(\sqrt{\log (d\gamma)})\cdot \sqrt{7(s_1(\mathbf{R}^{(m \lambda)})^2+1)}, 
 	 	 	\end{equation*}
 	 	 	 \begin{equation*}\label{k37}
 	 	 	s_1(\mathbf{R}^{(m \lambda)}) \leq \sqrt{((2d-1)t+(m \lambda)d\gamma \tau)\cdot (d\gamma \tau)}\cdot \omega(\log n)\cdot \sigma_{m \lambda},
 	 	 	\end{equation*}
 in which $\mathbf{R}^{(m \lambda)}$ is the matrix representation for the private key (the trapdoor) for $\overline{\mathbf{a}}^{(m \lambda)}=(\overline{\mathbf{a}}_{\epsilon}|\overline{\mathbf{h}}^{(1)}|\cdots| \overline{\mathbf{h}}^{(m \lambda)})$.
 		\item We set Gaussian parameters used in \textsf{Encrypt} $\overline{\alpha}=(\alpha_1,\cdots , \alpha_\lambda)$ such that for $\ell \in [\lambda]$, $\alpha_{\ell}$ satisfies \eqref{k21}.
 \end{itemize}

\subsection{Security Analysis}
\begin{theorem}

The proposed $\mathsf{HIBE}$ system is IND-sID-CPA secure in the standard model under the $\mathsf{DMPLWE}$ assumption.
\end{theorem}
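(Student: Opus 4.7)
The plan is to exhibit a PPT reduction $\mathcal{B}$ that, given a $\mathsf{DMPLWE}$ challenger, uses any $\mathsf{IND}\text{-}\mathsf{sID}\text{-}\mathsf{CPA}$ adversary $\mathcal{A}$ to distinguish $\mathsf{DMPLWE}$ samples from uniform, following the selective-identity simulation paradigm of~\cite{AB09} adapted to the polynomial/middle-product setting of~\cite{LVV19}. After $\mathcal{A}$ commits to a target identity $\mathsf{id}^{*}=(id_1^{*},\dots,id_k^{*})$, the reduction receives from the $\mathsf{DMPLWE}$ oracle samples of the form $(f_j,\mathsf{ct}_j)$ which will be embedded both in the polynomials appearing in $\overline{\mathbf{f}}_{\mathsf{id}^{*}}$ and in $u_0$.

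To program $\mathsf{MPK}$, I would set the root family $\overline{\mathbf{a}}_\epsilon$ to a fresh uniform family (so no trapdoor for $\overline{\mathbf{a}}_\epsilon$ is retained), take $u_0$ from the challenge slot, and for each level $i\in[\lambda]$ and bit $\mathsf{bit}\in\{0,1\}$ proceed as follows: when $i\le k$ and $\mathsf{bit}=id_i^{*}$, sample $\overline{\mathbf{h}}^{(i,\mathsf{bit})}$ uniformly, retaining no trapdoor; otherwise, invoke $\mathsf{TrapGen}$ to obtain $\overline{\mathbf{h}}^{(i,\mathsf{bit})}$ together with a trapdoor $\mathsf{td}_{i,\mathsf{bit}}$. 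The assumption $t'=m\gamma\tau$ with $m\ge 2$ is precisely what allows the $\mathsf{TrapGen}$ output $\overline{\mathbf{h}}^{(i,\mathsf{bit})}$ to be statistically close to uniform via a polynomial leftover hash lemma argument (the leading $t=(m-1)\gamma\tau$ uniform polynomials provide enough randomness to mask the trapdoor-produced tail), so the simulated $\mathsf{MPK}$ is indistinguishable from a genuine one.

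To answer a key-extraction query on $\mathsf{id}=(id_1,\dots,id_\ell)$ which, by the selective-identity rule, cannot be a prefix of $\mathsf{id}^{*}$, I would pick the smallest index $i^{\star}\le\min(\ell,k)$ with $id_{i^{\star}}\ne id_{i^{\star}}^{*}$, for which $\mathcal{B}$ holds the trapdoor $\mathsf{td}_{i^{\star},id_{i^{\star}}}$. Viewing $\overline{\mathbf{f}}_{\mathsf{id}}$ as the concatenation of $\overline{\mathbf{h}}^{(i^{\star},id_{i^{\star}})}$ with the remaining blocks, $\mathcal{B}$ iteratively invokes $\mathsf{SampleTrap}$ to append $\overline{\mathbf{a}}_\epsilon$ and every $\overline{\mathbf{h}}^{(j,id_j)}$ with $j\ne i^{\star}$, producing a trapdoor for $\overline{\mathbf{f}}_{\mathsf{id}}$ which, by Theorem~\ref{multiple} and the parameter setting in Section~\ref{params}, is statistically close to the output of the real $\mathsf{Extract}/\mathsf{Derive}$. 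For the challenge phase, $\mathcal{B}$ forms $\mathsf{CT}^{*}$ by placing the challenge $\mathsf{ct}_j$ into the appropriate ciphertext slots and setting $\mathsf{CT}_0^{*}=\mathsf{ct}_{u_0}+\mu^{*}$; real $\mathsf{DMPLWE}$ samples yield a distribution identical to an honest encryption of $\mu^{*}$ under $\mathsf{id}^{*}$, whereas uniform samples make $\mathsf{CT}^{*}$ uniform and independent of the challenge bit $b$.

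The hard part will be the key-query simulation. Two technical issues need care. First, $\mathsf{SampleTrap}$ as described starts the delegation from a trapdoor that sits at the \emph{left} end of the family, whereas the simulator owns a trapdoor for a block sitting in the middle; I would handle this either by re-indexing (exploiting permutation-invariance of the scalar product $\langle\overline{\mathbf{f}},\overline{\mathbf{r}}\rangle$) or by running $\mathsf{SampleTrap}$ on a permuted ordering and rearranging the resulting coefficients back to the canonical order, then verifying that the two distributions coincide. Second, the Gaussian widths $\overline{\Sigma}$ used by the simulator must dominate the singular values $s_1(\mathbf{R}^{(\cdot)})$ at every level of the iterated delegation so that Theorem~\ref{kk} applies at each step; the parameter choices in Section~\ref{params} suffice, but this must be checked uniformly over every identity the adversary could query. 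Combining these observations with the indistinguishability of the two challenge-ciphertext distributions yields $\mathsf{Adv}^{\mathsf{HIBE},\lambda,\mathcal{A}}(n)\le \mathsf{Adv}^{\mathsf{DMPLWE}}_{\mathcal{B}}(n)+\mathsf{negl}(n)$, which is negligible under the $\mathsf{DMPLWE}$ assumption.
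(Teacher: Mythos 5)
Your proposal is correct and follows essentially the same route as the paper: the paper formalizes the argument as a game sequence $G_0$--$G_4$, but the substance is identical --- generate $\overline{\mathbf{h}}^{(i,\mathsf{bit})}$ by $\mathsf{TrapGen}$ exactly when $\mathsf{bit}\neq id_i^{*}$ (uniform otherwise), answer key queries by delegating from the trapdoor of a differing block $\overline{\mathbf{h}}^{(j,id_j)}$ moved to the front of the family (the paper's $\mathsf{TrapExtract}$, which is precisely your re-indexing fix), and embed the $\mathsf{DMPLWE}$ samples into $u_0$ and the challenge ciphertext slots. You even isolate the same two technical points the paper relies on (the middle-block reordering and the per-level Gaussian parameter conditions), so no gap to report.
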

\proof 
We construct a sequence of games from $G_0$ to $G_4$ in which an \text{INDr--sID--CPA} adversary can distinguish two consecutive games $G_i$ and $G_{i+1}$ \textit{with negligible probability only}. In particular, for the transition of the last two games $G_3$ and $G_4$, we show by contradiction that if there exists an adversary whose views are different in each game, i.e., the adversary can distinguish $G_3$ from $G_4$ with non-negligible probability,  then we can build an adversary who can solve the underlying $\mathsf{DMPLWE}$ problem.\\ 

\noindent\underline{\textbf{Game $G_0$}} is the original $\text{IND--sID--CPA}$ game between the adversary $\mathcal{A}$ and the challenger $\mathcal{C}$. 
Note that, we are working with the selective game: at the beginning, $\mathcal{A}$ lets the challenger know the target identity $\mathsf{id}^*=(id^*_1, \cdots, id^*_\theta)$ that $\mathcal{A}$ intends to atack, where $\theta \leq \lambda$. Then, $\mathcal{C}$ runs \textsf{Setup} to choose randomly a vector of polynomials $\overline{\textbf{a}}_{\epsilon}=(a_1, \cdots, a_{t'})$ together with an associated trapdoor $\mathsf{td}_{\epsilon}$, a set of polynomial vectors sampled randomly $\overline{\mathbf{h}}^{(i,\mathsf{bit})}=(h_1^{(i,\mathsf{bit})}, \cdots,h_{t'}^{(i,\mathsf{bit})} )$,   which are stored in \textsf{HList0}, where each $h_{j}^{(i,\mathsf{bit})}\in \mathbb{Z}_q^{<n}[x]$ for $j\in [t]$, and each $h_{j}^{(i,\mathsf{bit})}\in \mathbb{Z}_q^{<n+d-1}[x]$  for $j\in \{t+1, \cdots, t+\gamma\tau\}$, and $\mathcal{C}$ also chooses a random polynomial $u_0 \in \mathbb{Z}_q^{<n+2d-2}[x]$. The challenger then sets $\mathsf{MSK}:=\mathsf{td}_{\epsilon}$ as the master secret key. Furthermore,  at the Challenge Phase, the challenger also generates a challenge ciphertext $\overline{\mathsf{CT}}^*$ for the identity $\mathsf{id}^*$.\\

\noindent\underline{\textbf{Game $G_1$}} is the same as $G_0$ except that in the Setup Phase the challenger $\mathcal{C}$ generates  $ (\overline{\textbf{h}}^{(i,\mathsf{bit})})_{0\leq i \leq \lambda, \mathsf{bit}\in \{0,1\}}$ stored in $\textsf{HList1}:=\{(i,\textsf{bit}, \overline{\textbf{h}}^{(i,\mathsf{bit})}): i \in [\lambda], \textsf{bit}\in \{0,1\} \}$ with the corresponding trapdoor $\textsf{td}^{(i, \textsf{bit})}$ stored in  $\textsf{TList1}:=\{(i,\textsf{bit}, \mathsf{td}^{(i,\mathsf{bit})}): i \in [\lambda], \textsf{bit}\in \{0,1\} \}$ using \textsf{TrapGen}.\\

\noindent\underline{\textbf{Game $G_2$}} is the same as $G_1$, except that the challenger $\mathcal{C}$ does not use $\textsf{td}_{\epsilon}$ as the master secret key nor the \textsf{Extract} algorithm to response a private key queries on $\textsf{id}=(id_1, \cdots, id_{\ell})$ which is not a prefix of the target $\textsf{id}^*$, where $\ell \leq \lambda$. Instead, $\mathcal{C}$ designs a new procedure \textsf{TrapExtract} with the knowledge of \textsf{TList1}. \textsf{TrapExtract} requires not all of $\textsf{TList1}$ but only one  $\mathsf{td}^{(j,id_{j})} \in \textsf{TList1}$ for any $j \in [\ell]$.

\underline{\textsf{TrapExtract}($\overline{\textbf{a}}_{\epsilon}$,\textsf{HList1},$\textsf{id}=(id_1, \cdots, id_{\ell})$, $j$,$\textsf{td}^{(j,id_{j})}$):}  

\begin{enumerate}

\item Build $\overline{\mathbf{f}}_{\mathsf{id}}=(\overline{\mathbf{a}}_{\epsilon}, \overline{\mathbf{h}}^{(1,id_1)},  \cdots,\overline{\mathbf{h}}^{(j-1,id_{j-1})}, \overline{\mathbf{h}}^{(j+1,id_{j+1})}, \cdots,\overline{\mathbf{h}}^{(\ell,id_{\ell})})$
\item $\mathsf{SK}_{\textsf{id}} \leftarrow \mathsf{SampleTrap}(\overline{\mathbf{h}}^{(j,id_j)},\overline{\mathbf{f}}_{\mathsf{id}},\textsf{td}^{(j,id_{j})},\overline{\Sigma}^{(\ell)})$
\end{enumerate}

\noindent\underline{\textbf{Game $G_3$}} is the same as $G_2$, except that in the Setup Phase, the challenger $\mathcal{C}$ generates $\textsf{HList3}$ as follows: 
\begin{itemize}
\item For each $j \in [\lambda]$ and $\textsf{bit}\in\{0,1\}$ such that $\textsf{bit} \neq  id^*_j,$ $\mathcal{C}$ calls \textsf{TrapGen} to generate  $ \overline{\textbf{h}}^{(j,\mathsf{bit})}$ and its associated trapdoor $\textsf{td}^{(j,\textsf{bit})}$.  

\item For each $j \in [\lambda]$ and $\textsf{bit}\in\{0,1\}$ such that $\textsf{bit} =  id^*_j,$ $\mathcal{C}$ simply samples $ \overline{\textbf{h}}^{(j,\mathsf{bit})}$ uniformly at random and set $\textsf{td}^{(j,\textsf{bit})}=\bot$.  
 
\end{itemize}
The challenger then put all $ \overline{\textbf{h}}^{(j,\mathsf{bit})}$ into \textsf{HList3} and all $\textsf{td}^{(j,\textsf{bit})}$ into \textsf{TList3}.
 At the moment, to response a private key query on identity $\textsf{id}=(id_1, \cdots, id_{\ell})$ which is not a prefix of the target identity $\textsf{id}^*$, the challenger chooses an index $j^{\dagger}$ such that $id_{j^{\dagger}}\neq id^*_{j^{\dagger}}$. It then runs \textsf{TrapExtract}($\overline{\textbf{a}}_{\epsilon}$,\textsf{HList3},$\textsf{id}=(id_1, \cdots, id_{\ell})$, $j^{\dagger}$,$\textsf{td}^{(j^{\dagger},id_{j^{\dagger}})}$), where $\textsf{td}^{(j^\dagger,id_{j^\dagger})}  \in \textsf{TList3}$, and gives the result $\textsf{SK}_{\textsf{id}}$ to the adversary.
At the Challenge Phase, the challenge ciphertext $\overline{\textsf{CT}}^*$ is generated by computing \textsf{Encrypt}($\mathsf{id},\mu, u_0, \overline{\alpha}$) over $\mathsf{HList3}$.\\

\noindent\underline{\textbf{Game $G_4$}}  is the same as $G_3$, 
except that the challenge ciphertext $\overline{\textsf{CT}}^*=(\textsf{CT}^*_0, \textsf{CT}^*_1)$ 
is chosen uniformly at random by the challenger. 

In what follows, we show the indistinguishability of the games.
It is easy to see that the view of the adversary is identical in games $G_0$ and $G_1$, 
in games $G_1$ and $G_2$, in games $G_2$ and $G_3$, except in games $G_3$ and $G_4$. 
We show that the view of the adversary is indistinguishable in these two games. 
We proceed by contradiction.
Assume that the adversary $\mathcal{A}$ can distinguish between games $G_3$ and $G_4$ 
with non-negligile probability. 
Then we construct an adversary $\mathcal{B}$ that is able to solve \textsf{DMPLWE} problem with the same probability. The reduction from \textsf{DMPLWE} is as follows:
\begin{itemize}
\item \textbf{Instance:} Assume that the goal of $\mathcal{B}$ is to decide whether $1+t'(\ell+1)$ samples $ (f_{z},\textsf{ct}_{z})$ for $z \in \{0, \cdots, t'(\ell+1)\}$  (i) follow $ \prod_{z=0}^{t'(\ell+1)} \mathcal{U}(\mathbb{Z}_q^{n'-d_{z}}[x] \times \mathbb{R}_q^{d_{z}[x]} ),$ or (ii) follow $ \mathsf{DMP}_{q,n',\textbf{d}, \chi}(s)$, where $n'=n+2d+k$ and 
\begin{itemize}
\item 
$\textbf{d}=(d_0, d_1,\cdots, d_{t'(\ell+1)})$ is interpreted as follows: $d_0:=k+2$ and for $i\in \{0, \cdots, \ell\}$,
 $d_{i\cdot t'+j} = \begin{cases} 
 2d+k, & \mbox{if } j \in [t] ,\\ 
 d+k+1, & \mbox{if } j \in \{t+1, \cdots, t+\gamma \tau \}. 
  \end{cases}$

\item  $f_z$ are random in $\mathbb{Z}_q^{<n'-d_z}[x]$ for $z\in \{0, \cdots, t'(\ell+1)\}$.

\end{itemize}
In other words, $\mathcal{B}$ has to distinguish whether (i) all $\textsf{ct}_{z}$ are random or (ii) $\textsf{ct}_z=f_z\odot_{d_z}s+2e_z$ in $\mathbb{Z}_q^{<d_z}[x]$, for some $s \xleftarrow[]{\$}\mathbb{Z}_q^{<n'-1}[x]$ and 
 $e_{z} \leftarrow \chi^{d_{z}}[x]$, for all $z\in \{0, \cdots, t'(\ell+1)\}$.

\item \textbf{Targeting:} $\mathcal{B}$ receives from the adversary $\mathcal{A}$ the target identity $\textsf{id}^*$ that $\mathcal{A}$ wants to attack.
\item \textbf{Setup:}. $\mathcal{B}$ generates \textsf{HListB} in the same way as in Game $G_3$ and Game $G_4$ as follows:
\begin{itemize}
\item For each $j \in [\lambda]$ and $\textsf{bit}\in\{0,1\}$ such that $\textsf{bit} \neq  id^*_j:$ $\mathcal{B}$ calls \textsf{TrapGen} to generate  $ \overline{\textbf{h}}^{(j,\mathsf{bit})}$ and its associated trapdoor $\textsf{td}^{(j,\textsf{bit})}$.  

\item For each $j \in [\lambda]$ and $\textsf{bit}\in\{0,1\}$ such that $\textsf{bit} =  id^*_j:$ $\mathcal{B}$ simply samples $ \overline{\textbf{h}}^{(j,\mathsf{bit})}$ uniformly at random and set $\textsf{td}^{(j,\textsf{bit})}=\bot$.  
 
\end{itemize}
The challenger then put all $ \overline{\textbf{h}}^{(j,\mathsf{bit})}$ into \textsf{HListB} and all $\textsf{td}^{(j,\textsf{bit})}$ into \textsf{TListB}.
\item \textbf{Queries}: To response the private key queries, $\mathcal{B}$ acts as in Game $G_3$ or in Game $G_4$ using one of trapdoors that is not $\bot$.
\item \textbf{Challenge:} To produce the challenge ciphertext, $\mathcal{B}$ chooses randomly $b\xleftarrow[]{\$}\{0,1\}$ and sets $ \overline{\mathsf{CT}}^*:=(\mathsf{CT}^*_0:=ct_0+\mu, \mathsf{CT}^*_1:=(\textsf{ct}_1, \cdots, \textsf{ct}_{t'(\ell+1)})) $.
\item  \textbf{Guess:} Eventually, $\mathcal{A}$ has to guess and output the value of $b$. Then, $\mathcal{B}$ returns what $\mathcal{A}$ outputted. 
\end{itemize}

\textbf{Analysis. }Clearly, from the view of $\mathcal{A}$, the behaviour of $\mathcal{B}$ is almost identical in both Games $G_3$ and $G_4$. The only different thing is producing the challenge ciphertext. Specifically, if $\textsf{ct}_z$'s are \textsf{DMPLWE} samples then the components of $\overline{\mathsf{CT}}^*$ are distributed as in Game $G_3$, while $\textsf{ct}_z$'s are random then the components of $\overline{\mathsf{CT}}^*$ are distributed as  in Game $G_4$. Since  $\mathcal{A}$ can distinguish between Games $G_3$ and $G_4$ with non-negligile probability, then so can $\mathcal{B}$ in solving \textsf{DMPLWE} with the same probability. \qed

\section{Conclusions} \label{sec5}
In this paper, we present a trapdoor delegation method that enables us to obtain a trapdoor for an expanded set of polynomials from a given trapdoor for a subset of the set. Also, thanks to the polynomial trapdoor delegation, we built a hierarchical identity--based encryption system that is secure in the standard model under the \textsf{DMPLWE} assumption.  
\subsubsection{Acknowledgment.} 

We all would like to thank anonymous reviewers for their helpful comments.  This work is partially supported by the Australian Research Council Discovery Project DP200100144. The first author has been sponsored by a Data61 PhD Scholarship.  The fourth author has been supported by the Australian ARC grant DP180102199 and Polish NCN grant 2018/31/B/ST6/03003.

\bibliographystyle{splncs04}

\end{document}